\newcommand{\BEQ}{\begin{equation}}
\newcommand{\EEQ}{\end{equation}}
\def\bea{\begin{eqnarray}}
\def\eea{\end{eqnarray}}
\def\nn{\nonumber}
\newtheorem{Th}{Theorem}
\newtheorem{Lem}{Lemma}
\newtheorem{Prop}{Proposition}
\newtheorem{Ex}{Example}
\newtheorem{Def}{Definition}
\def\bea{\begin{eqnarray}}
\def\eea{\end{eqnarray}}
\def\bes{\begin{equation*} \begin{split}}
\def\ees{\end{split} \end{equation*}}
\def\R{{\mathbb{ R}}}
\def\Z{{\mathbb{ Z}}}
\def\CC{{\mathbb{ C}}}
\begin{document}
\begin{flushright}
ITEP-TH-23/15
\end{flushright}
~\\
{\LARGE \bf Cohomology of the tetrahedral complex }
\vskip 1mm
~\\
{\LARGE \bf and quasi-invariants of $2$-knots}

\vskip 5mm
~\\
{\Large I.G. Korepanov\footnote{(MIREA) korepanov@mirea.ru}, 
G.I. Sharygin\footnote{(MSU, ITEP) gshar@yandex.ru}, D.V. Talalaev\footnote{(MSU, ITEP) dtalalaev@yandex.ru} }
\vskip 5mm
~\\
{\bf Abstract} 
\\
This paper explores a particular statistical model on 6-valent graphs with special properties which turns out to be invariant with respect to certain Roseman moves if the graph is the singular point graph of a diagram of a 2-knot. The approach uses the technic of the tetrahedral complex cohomology.
\\
We emphasize that this model considered on regular $3d$-lattices in the work \cite{T} appears to be integrable. We also set out some ideas about the possible connection of this construction with the area of topological quantum field theories in dimension $4$.

\tableofcontents
\vskip 2cm
\section{Introduction}
This work is one in the series of papers aimed to understand the transition from the Yang-Baxter equation to the Zamolodchikov tetrahedron equation. On the level of formulas their expressions are strikingly similar: the {\em matrix} Yang-Baxter equation takes the form
\bea
\label{YBE1}
R_{12}R_{13}R_{23}=R_{23}R_{13}R_{12} \in End(V^{\otimes 3}),\qquad R\in End(V^{\otimes 2}),
\eea
and the {\em matrix} tetrahedron Zamolodchikov equation \cite{Zam} - 
\bea
\label{TE1}
\Phi_{123}\Phi_{145}\Phi_{246}\Phi_{356}=\Phi_{356}\Phi_{246}\Phi_{145}\Phi_{123}\in End(V^{\otimes 6}),\qquad \Phi\in End(V^{\otimes 3}).
\eea
In both cases $V$  is a finite dimensional vector space, the indices denote the numbers of the space copies in which linear operators act non-trivially. One can also mention the $n$-simplex equation (eg. \cite{KST}), which is a straightforward generalization of both these equations to the case when the number of tensor factors on both sides of the equality grows. 

This affinity seems to be a specific manifestation of a ubiquitous more general phenomenon; similar patterns can be seen on the level of homotopy Lie algebras, combinatorial objects like braid group and the category of 2-tangles, integrable spin chains on 1- and 2-dimensional lattices, corresponding 2- and 3-dimensional statistical models and in many other places. In this paper we elaborate on possible relation between these equations and the low-dimensional topological objects such as 2-dimensional knots. Namely we make an attempt to apply the statistical model proposed in \cite{T} to construct certain invariants of $2$-dimensional knots. Unfortunately this attempt fails in full generality, the corresponding partition function appears to be invariant with respect only to some of Roseman moves which play in the theory of $2$-knots a r\^ole equivalent to the r\^ole of Reidemeister moves in $1$-knots. However the partial invariance of the constructed functional is not a totally trivial result and that is why we dare to explain it here. Moreover we expect some strong relations of our construction with $4$-dimensional topological quantum field theories like BF-theory; these hopes are inspired by the Jones-Witten invariant \cite{W} that relates the traditional 1-knot diagram technic with observables in the Chern-Simons theory. 

\subsection*{Acknowledgments}
The work of G.S was supported by RFBR grant 15-01-05990. 
The work of D.T. was partially supported by the RFBR grant 14-01-00012, grant of the support of scientific schools NSH-4833.2014.1, and the grant of the Dynasty Foundation.

\section{Preliminaries}
\subsection{Zamolodchikov tetrahedron equation}
\label{tetra}

We start with the recollection of the \textit{set-theoretic tetrahedral equation} (STTE) which may be described as follows: let $X$ be a finite set, we say that there is a solution for STTE on $X$ if there is a map
$$
X\times X\times X\stackrel{\Phi}{\longrightarrow} X\times X\times X,
$$
satisfying the relation
\begin{equation}
\label{eqtetr1}
\Phi_{123}\circ \Phi_{145}\circ \Phi_{246}\circ \Phi_{356} = \Phi_{356}\circ \Phi_{246}\circ \Phi_{145}\circ \Phi_{123}:X^{\times 6}\to X^{\times 6}.
\end{equation}
Here, however, unlike (\ref{TE1}), $X^{\times 6}$ denotes the Cartesian product of $X$ to itself six times, and the subscripts denote the number of factors to which $\Phi$ is applied, in other factors the map acts identically. For example
\begin{equation}
\label{eqtetr}
\begin{aligned}
\Phi_{356}(a_1,a_2,a_3,a_4,a_5,a_6)&=(a_1,a_2,\Phi_1(a_3,a_5,a_6),a_4,\Phi_2(a_3,a_5,a_6),\Phi_3(a_3,a_5,a_6))\\&=(a_1,a_2,a_3',a_4,a_5',a_6'),
\end{aligned}
\end{equation}
where
$$
\Phi(x,y,z)=(\Phi_1(x,y,z),\Phi_2(x,y,z),\Phi_3(x,y,z))=(x',y',z').
$$
Graphically equation \eqref{eqtetr1} can be explained by figure \ref{TE1}, which describes the evolution of the entries in Cartesian product on both sides of the equality.
\begin{figure}[h]
\center{\includegraphics[width=15cm]{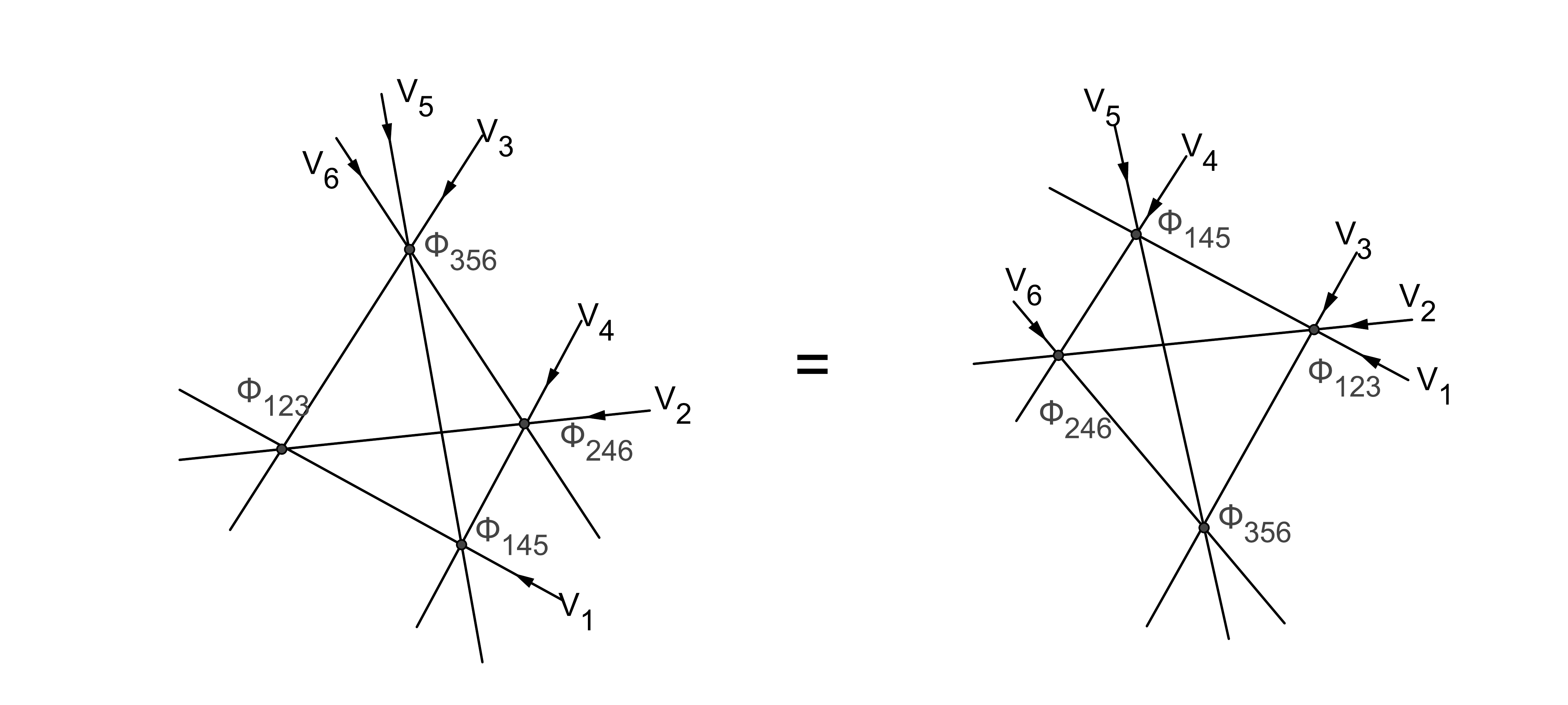}}
\caption{Tetrahedron equation}
\label{TE1}
\end{figure}
There is an important particular class of solutions of this equation, which can be obtained by a suitable dualization of the functional tetrahedral equation, i.e. of the usual (matrix) Zamolodchikov tetrahedron equation in which the vector space $V$ is the space of functions on a (algebraic) variety; for example one can take the so-called electric solution of functional equation, which is given by the following transformation (or ansatz) acting on the space of functions of three variables:
\bea
\label{el}
\Phi(x,y,z)=(x_1,y_1,z_1);\\
x_1=\frac {x y} {x+z+x y z},\nn\\
y_1=x+z+x y z,\nn\\
z_1=\frac{ yz} {x+z+x y z}.\nn
\eea
This solution is relevant to the well-known star-triangle relation in the theory of electric circuits.

Another important interpretation of the tetrahedron equation is in terms of the combinatorial problem of coloring of $2$-faces of the $4$-dimensional cube with elements of some set $X$. Let $\Phi:X\times X\times X \rightarrow X \times X \times X$ be a map. A coloring of 2-faces of $I^4$ is a map from the set of 2-faces into $X$; it is called admissible if the colors $x,y,z$ of the incoming faces of any $3$-face are related with the colors $x',y',z'$ of its outgoing faces by the action of the map $\Phi:$
\bea
(x',y',z')=\Phi(x,y,z)\nn
\eea
The problem is discussed in more detail in \cite{KST}.

\begin{figure}[h]
\begin{minipage}[h]{0.49\linewidth}
\center{\includegraphics[width=1\linewidth]{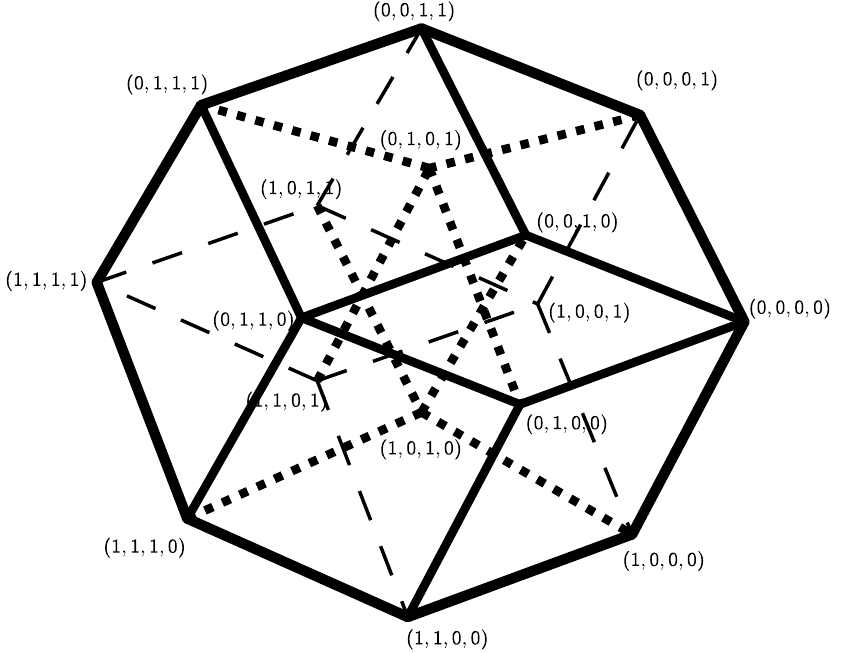} }
\caption{Tesseract}
\label{tesseract}
\end{minipage}
\hfill
\begin{minipage}[h]{0.49\linewidth}
\center{\includegraphics[width=0.7\linewidth]{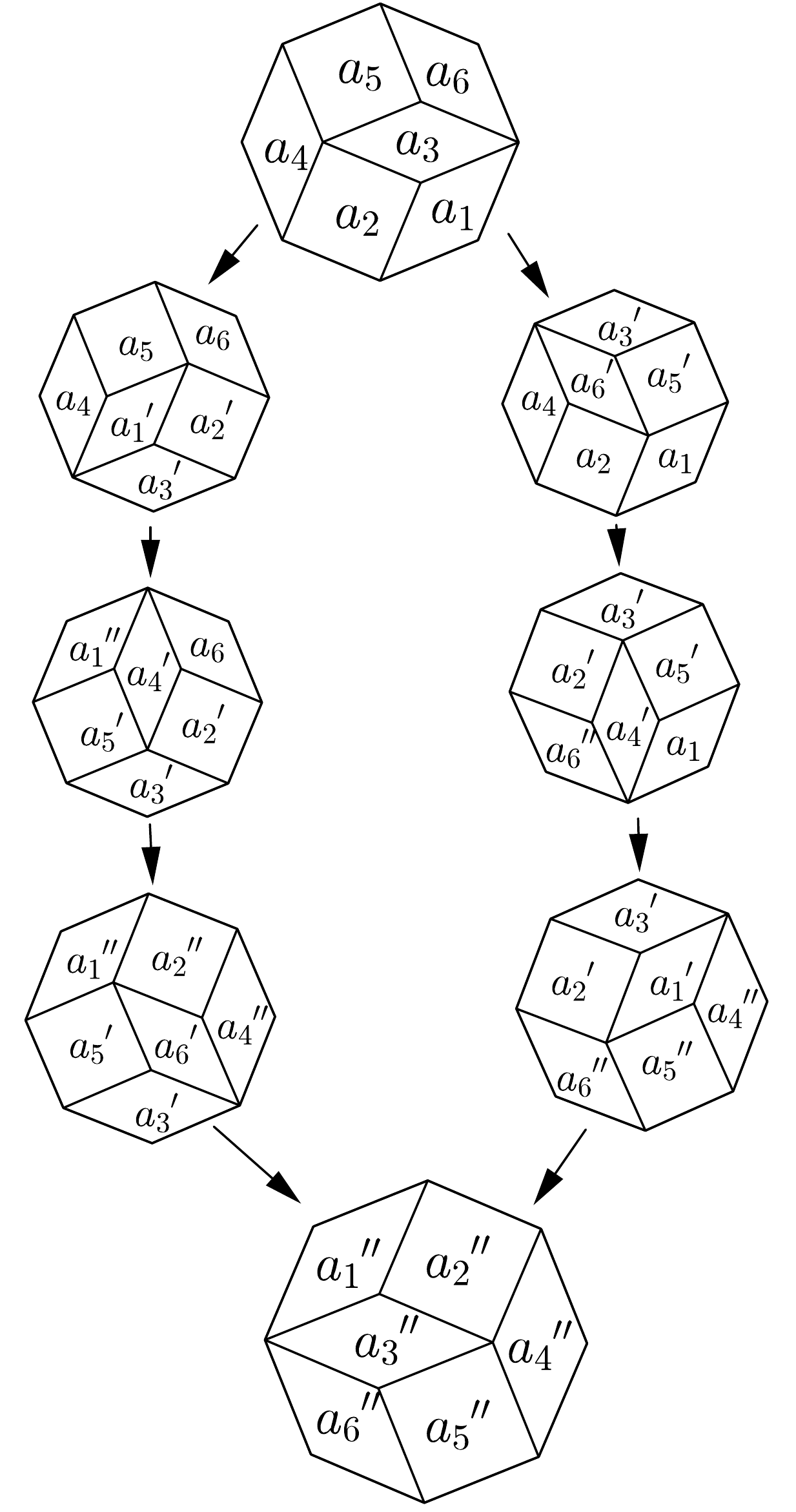} }
\caption{Coloring}
\label{proc}
\end{minipage}
\end{figure}

Figure \ref{tesseract} shows a projection of the $4$-cube to a $3$-space and  figure \ref{proc} shows two possible ways in which the faces can be colored, starting from visible faces and ``pushing across'' one 3-face after another. It turns out that the colorings obtained by these two ways are equal if and only if the tetrahedral equation on $\Phi$ holds.

Similar colorings of $2$-faces of an arbitrary $N$-cube give rise to a homology and cohomology complexes, associated with a solution of the set-theoretic tetrahedral equation. These complexes are completely analogous to those calculating the Yang-Baxter cohomology, see \cite{CES}. We give some detail on this subject in the following section.

\subsection{Tetrahedral complex}
\label{sectimp}
Let us recall some results from \cite{KST}. Let $I^N$ be the standard $N$-cube, i.e. 
\[
I^N=\underbrace{I\times I\times\dots\times I}_{N\ \mbox{times}},\ \mbox{where}\ I=[0,1].
\]
We shall denote by $I^N_2$ the set of its $2$-faces. One can parametrize $I^N_2$ by sequences of symbols $\tau=(\tau_1,\ldots,\tau_N)$ where $\tau_k$ take values $0,1,\ast$; here $\ast$ corresponds to a coordinate varying in the interval  $[0,1]$. It follows from the definitions, there are exactly $n$ asterisks in a sequence, corresponding to an $n$-dimensional face.

Let us fix a face $\tau$ and denote by $\{j_k\}_\tau$ the set of indices of symbols $\ast$ in the sequence $\tau$. A codimension $1$ subface of $\tau$ is determined by substituting one of the numbers $0$ or $1$ instead one of the asterisks in the sequence. Let us fix the index $j_k$ of the corresponding symbol.

We define the alternating sequence:
\bea
\varkappa_{1}=0,\varkappa_{2}=1,\varkappa_{3} \ldots.\nn
\eea
\begin{Def}
 The subface is called incoming if the $j_k$-th coordinate of the subface coincides with $\varkappa_k$ and outgoing otherwise. 
\end{Def}
\begin{figure}[h!]
\centering
\includegraphics[width=6cm]{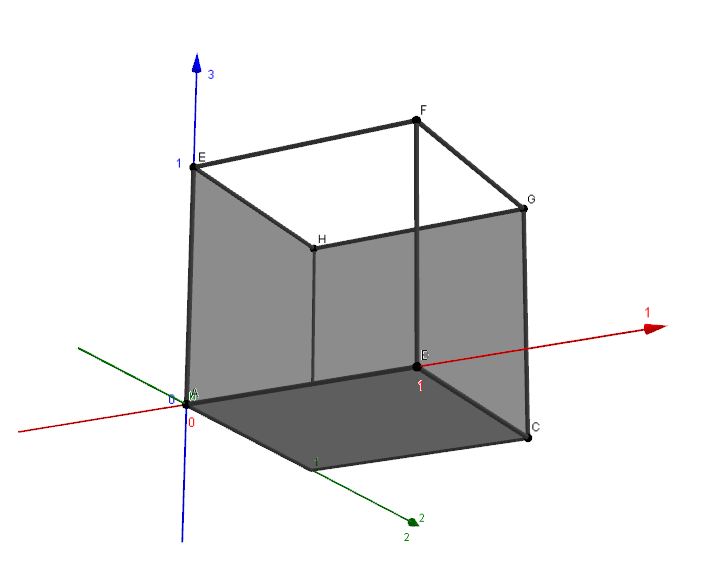}
\caption{Incoming(black) and outgoing(white) faces of a standard $3$-cube}
\end{figure}
Let us fix a set $X$ and a solution of the set-theoretic tetrahedron equation $\Phi:X\times X\times X\rightarrow X\times X\times X$.
\begin{Def}
A coloring of  $2$-faces of an $N$-cube $c:I^N_2\rightarrow X$ is called admissible if for any  $3$-face the colors of its incoming $2$-faces $(x,y,z)$ and the colors of the corresponding outgoing $2$-faces $(x',y',z')$ are related by the equality
\bea
(x',y',z')=\Phi(x,y,z),\nn
\eea
where the order of faces is defined in the lexicographic way.
\end{Def}
Let us consider a complex $C_*(X)=\bigoplus_{n\ge2}C_n(X)$ where
$$
C_n(X)=C_n(X,\,\Bbbk)=\Bbbk\cdot C^2(n,\,X),
$$
here $C^2(n,\,X)$ is the set of all admissible colorings $2$-faces of an $n$-cube, and $\Bbbk\cdot C^2(n,\,X)$ is the free $\Bbbk$\/-module generated by it. The differential $d_n:C_n\to C_{n-1}(X)$ is defined by the formula  
$$
d_n(c)=\sum_{k=1}^n\left(d^i_kc-d^o_kc\right),
$$
where $d^f_ic$ ( $d^o_kc$) is the restriction of the coloring $c$ to the  $k$\/-th incoming (resp. outgoing) $(n-1)$\/-face of the cube $I^n$. Denote by $H_*(X,\,\Bbbk)$ the corresponding homologies; we call it \textit{tetrahedral homology of $X$ with coefficients $\Bbbk$}.
\begin{Def}
We call an $n$-face of an $N$-cube absolutely incoming if it is not outgoing of any $n+1$-face.
\end{Def}
\begin{Lem}
 A coloring of $2$-faces of an $N$-cube is uniquely defined by a coloring of absolutely incoming $2$-faces.
 \end{Lem}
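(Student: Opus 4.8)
The plan is to describe the absolutely incoming $2$-faces explicitly and then to propagate the colouring outward from them by a well-founded induction, using admissibility at each step to recover the colours of the outgoing faces of a $3$-face from those of its incoming faces via $\Phi$. First I would determine exactly which $2$-faces are absolutely incoming. Fix a $2$-face $\tau$ with asterisks in positions $j_1<j_2$. Promoting a fixed coordinate $q$, with value $v$, to an asterisk produces a $3$-face containing $\tau$, in which $q$ becomes the $k$-th asterisk, where $k=1,2,3$ according as $q<j_1$, $j_1<q<j_2$, or $q>j_2$; by the definition of incoming and outgoing, $\tau$ is the incoming subface of this $3$-face exactly when $v=\varkappa_k$. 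Hence $\tau$ is absolutely incoming if and only if its fixed coordinates obey the alternating block pattern $0\cdots0\,\ast\,1\cdots1\,\ast\,0\cdots0$: all $0$ before $j_1$, all $1$ between $j_1$ and $j_2$, all $0$ after $j_2$. In particular there is exactly one absolutely incoming $2$-face for each choice of asterisk positions, and these will be the free initial data.

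Next I would attach to every $2$-face $\tau$ the integer $w(\tau)$ counting the fixed coordinates that violate this pattern (a $1$ before $j_1$ or after $j_2$, or a $0$ between $j_1$ and $j_2$), so that $w(\tau)=0$ characterises the absolutely incoming faces, the intended base of the induction. A direct check on the six $2$-subfaces of a single $3$-face shows that the incoming and the outgoing subface fixing the same asterisk differ only in the value at that asterisk, the outgoing one carrying the ``wrong'' value; consequently each outgoing subface has $w$ equal to that of its partner incoming subface plus one. In particular any $\tau$ with $w(\tau)>0$ possesses a wrong coordinate $q$, and promoting $q$ exhibits $\tau$ as an outgoing face of a $3$-face whose partner incoming face has rank $w(\tau)-1$.

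I would then argue by induction on $w$. Let $c,c'$ be admissible colourings agreeing on all absolutely incoming faces; they agree at rank $0$. For $\tau$ with $w(\tau)>0$, choose a $3$-face $\sigma$ in which $\tau$ is outgoing; admissibility gives $(x',y',z')=\Phi(x,y,z)$, where $(x',y',z')$ are the colours of the three outgoing $2$-faces of $\sigma$ and $(x,y,z)$ those of its three incoming $2$-faces. Thus, once the three incoming faces of $\sigma$ are known, the colour of $\tau$ is forced; if those three faces have strictly smaller rank, the inductive hypothesis makes $c$ and $c'$ agree on them, hence on $\tau$. Since every step is forced, $c=c'$ everywhere, which is the claimed uniqueness.

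The delicate point is the last step: to pin down one outgoing face through $\Phi$ one must already know all three incoming faces of the chosen $3$-face, so the induction needs all three to have strictly smaller rank, not merely the partner face. The other two incoming faces have different asterisk positions, and moving an asterisk can turn previously correct coordinates into wrong ones, so the plain count $w$ need not drop for them; moreover the alternating values $\varkappa_k$ prevent separating incoming from outgoing faces by any linear ``sweep'' of the cube, since the incoming faces of a $3$-face sit on different sides of its middle asterisk. I therefore expect the real work to lie in choosing the rank and the witnessing $3$-face together: either refining $w$ by the asterisk gap $j_2-j_1$ and always promoting a suitably chosen wrong coordinate, or, equivalently, exhibiting the explicit order on $3$-faces underlying the tesseract colouring procedure of Section~\ref{tetra}, which by construction processes each $3$-face only after all of its incoming $2$-faces have been coloured.
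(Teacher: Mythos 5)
The paper states this lemma without proof (the whole subsection is quoted from \cite{KST}), so your attempt has to stand on its own; as written it does not, although you have correctly located where it breaks. Your description of the absolutely incoming $2$-faces as exactly those of the form $0\cdots0\,\ast\,1\cdots1\,\ast\,0\cdots0$ is right (and matches the count $C_N^2$ in the text), as is the observation that the two subfaces of a $3$-face obtained by fixing the same asterisk differ only in that coordinate, the outgoing one carrying one extra wrong value, so that $w(\mathrm{out})=w(\mathrm{in})+1$. But the induction on $w$ does not close, for exactly the reason you flag yourself: to determine one outgoing $2$-face of a $3$-face through $\Phi$ you need all three incoming $2$-faces of that $3$-face, and these need not have smaller rank. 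This already fails for $N=4$: take $\sigma=(\ast,0,\ast,\ast)$ and its outgoing face $\tau=(1,0,\ast,\ast)$, which has $w(\tau)=1$ and is outgoing in no other $3$-face (promoting the second coordinate gives $(1,\ast,\ast,\ast)$, in which $\tau$ is incoming, since it sets the first asterisk to $0=\varkappa_1$). The other two incoming faces of $\sigma$ are $(\ast,0,1,\ast)$ and $(\ast,0,\ast,0)$, both also of rank $w=1$. So the inductive step is unavailable for $\tau$, and the argument as stated proves nothing beyond rank $0$.

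The statement is nevertheless true, and the example shows what actually has to be proved: the closure of the set of absolutely incoming $2$-faces under the rule ``once all three incoming $2$-faces of a $3$-face are determined, so are its three outgoing ones'' exhausts all $2$-faces. In the example, $(\ast,0,1,\ast)$ is outgoing in $(\ast,\ast,1,\ast)$, whose incoming faces are $(0,\ast,1,\ast)$ and $(\ast,1,1,\ast)$ (both absolutely incoming) and $(\ast,\ast,1,0)$, which is in turn outgoing in $(\ast,\ast,\ast,0)$, all of whose incoming faces are absolutely incoming; a similar chain handles $(\ast,0,\ast,0)$, and only then is $\tau$ reachable. So the dependency order exists but is genuinely more intricate than the grading by $w$, and neither of the two repairs you sketch is carried out; note also that any rank function for which each outgoing face exceeds its partner incoming face by exactly one would have to give all three incoming faces of every $3$-face the same rank, which rules out the most obvious refinements. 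Exhibiting this well-founded order --- the ``pushing across one $3$-face after another'' of the tesseract picture, in arbitrary dimension --- is the entire content of the lemma and is the step your proposal leaves open. (A secondary point: as the lemma is used in the paper, to identify $C_n(X)$ with the free module on $X^{\times C_n^2}$, one also needs that every assignment of colors to the absolutely incoming faces extends to an admissible coloring; your argument addresses only uniqueness, though that is arguably all the literal statement claims.)
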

 The number of absolutely incoming $2$-faces is equal to $C_N^2$. Hence in low dimension the complex is represented by
$$
\begin{aligned}
C_2(X)&=\Bbbk\cdot X,\\
C_3(X)&=\Bbbk\cdot X^{\times3},\\
C_4(X)&=\Bbbk\cdot X^{\times6}.
\end{aligned}
$$
We will usually denote a coloring by the corresponding colors of absolutely incoming faces; this notation is short, but formulas for the differential in this notation become rather complicated.

Further, we can dualize the construction and consider the \textit{tetrahedral cohomology of $X$}. By definition an element of the corresponding complex is equal to a linear map from $C_*(X)$ into $\Bbbk$; this map is uniquely defined by its values on the basis elements, i.e. on colorings, and the differential $\partial$ is equal to the conjugation of $d:C_n(X)\to C_{n-1}(X)$. 

\textit{Below we shall usually consider the \textbf{potentiated version of tetrahedral cohomology}}, i.e. let $\Bbbk=\mathbb R$ or $\mathbb C$. For any cochain $f:C_n(X)\to\Bbbk$, we can associate the map $\phi=\exp{(f)}:C_n(X)\to\Bbbk$, so that in all the formulas, including the formula for the differential all the sums will be replaced by products. We shall call the corresponding formulas \textit{multiplicative}. Clearly, such formulas can be written for an arbitrary field $\Bbbk$ (although in a general case there will not be such a clear relation with the usual cohomology). An important example is when $n=4$; in this case the equality $\partial f=0$ implies the following equation for the multiplicative $3$-cocycle:
 \bea
\label{cocycle}
&\varphi(a_1,a_2,a_3)\varphi(a_1',a_4,a_5)\varphi(a_2',a_4',a_6)\varphi(a_3',a_5',a_6')=\nn\\
 &=\varphi(a_3,a_5,a_6)\varphi(a_2,a_4,a_6')\varphi(a_1,a_4',a_5')\varphi(a_1',a_2',a_3')
\eea
in the notation of the picture \ref{proc}. Then the following simple but important statement holds (Theorem 4 \cite{KST}):
\begin{Lem}
\label{lem_lin}
Let $\Phi$ be a solution for the STTE and let $\phi$ be a multiplicative $3$-cocycle of the tetrahedral complex. Denote by $V=V(X)$ the vector space generated by the elements of the set $X$. Then let us define a linear operator $A$ on $V^{\otimes 3}$ specifying its values on tensor products of basis vectors. We say that 
\[
A(s)(e_x\otimes e_y \otimes e_z)=\phi(x,y,z)^s(e_{x'}\otimes e_{y'}\otimes e_{z'})
\] 
if and only if $\Phi(x,y,z)=(x',y',z')$ (here $e_x$ denotes the basis element in $V$, that corresponds to some $x\in X$). In this case $A(s)$ provides a solution for the matrix tetrahedral equation.
\end{Lem}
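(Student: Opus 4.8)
The plan is to evaluate both sides of the matrix tetrahedron equation
\[
A_{123}(s)A_{145}(s)A_{246}(s)A_{356}(s) = A_{356}(s)A_{246}(s)A_{145}(s)A_{123}(s)
\]
on an arbitrary basis tensor $e_{a_1}\otimes\cdots\otimes e_{a_6}$ of $V^{\otimes 6}$ and to check that the two results coincide. The crucial structural feature of $A(s)$ is that it carries a basis vector to a scalar multiple of a \emph{single} basis vector: the index part is transported by $\Phi$, while a factor $\phi(\cdots)^s$ is produced. Consequently any ordered product of the operators $A_{ijk}(s)$ again sends a basis vector to a scalar multiple of a basis vector, and the total scalar is the product of the individual scalars, each evaluated at the triple of colors present in the relevant factors at the instant that operator acts.

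First I would treat the index part in isolation. Forgetting the scalars, $A_{ijk}(s)$ acts on the indices exactly as $\Phi_{ijk}$ acts in the set-theoretic equation \eqref{eqtetr1}. Hence the composite index transformation produced by the left-hand side is $\Phi_{123}\circ\Phi_{145}\circ\Phi_{246}\circ\Phi_{356}$ and that produced by the right-hand side is $\Phi_{356}\circ\Phi_{246}\circ\Phi_{145}\circ\Phi_{123}$. By hypothesis $\Phi$ solves the STTE, so these two composite maps on $X^{\times 6}$ coincide. Therefore both sides of the matrix equation send $e_{a_1}\otimes\cdots\otimes e_{a_6}$ to a scalar multiple of one and the same basis vector, and it only remains to compare the two scalars.

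Next I would read off the accumulated scalars. Tracking the colors of the six factors as each operator fires (right to left in the composition), the left-hand side produces
\[
\bigl[\phi(a_3,a_5,a_6)\,\phi(a_2,a_4,a_6')\,\phi(a_1,a_4',a_5')\,\phi(a_1',a_2',a_3')\bigr]^s,
\]
the four factors coming from $A_{356},A_{246},A_{145},A_{123}$ in turn, where the primed colors are precisely the intermediate values generated by the admissible coloring recorded in figure \ref{proc}. The right-hand side produces in the same way
\[
\bigl[\phi(a_1,a_2,a_3)\,\phi(a_1',a_4,a_5)\,\phi(a_2',a_4',a_6)\,\phi(a_3',a_5',a_6')\bigr]^s.
\]
These are exactly the two sides of the cocycle relation \eqref{cocycle}, raised to the power $s$. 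Since $\phi$ is a multiplicative $3$-cocycle, the bracketed products are equal, hence so are their $s$-th powers, and the two scalars agree.

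Combining the two comparisons, both sides of the matrix tetrahedron equation act identically on every basis vector, and therefore as operators on $V^{\otimes 6}$, which proves that $A(s)$ is a solution. I expect the main obstacle to be purely bookkeeping: one must verify that the intermediate colors arising when the operators are composed are literally the primed variables appearing in \eqref{cocycle}, i.e. that the ``moment of evaluation'' of each cocycle factor matches the slot assignment dictated by the tesseract coloring, and that these primed variables are the same on both sides (which is guaranteed by admissibility of the coloring). Once this identification is made the verification is immediate, since the STTE controls the index part and the cocycle identity controls the scalar part with no interaction between them.
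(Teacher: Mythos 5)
Your proof is correct and is the standard verification: since each $A_{ijk}(s)$ sends a basis vector to a scalar multiple of a single basis vector, the index part of both sides of the matrix equation is controlled by the set-theoretic equation \eqref{eqtetr1}, while the accumulated scalars are exactly the two sides of the cocycle identity \eqref{cocycle}, so the two comparisons decouple as you describe. Note that the paper itself offers no proof of this lemma --- it is quoted as Theorem~4 of \cite{KST} --- so there is no in-paper argument to compare against; the only delicate point, which you correctly flag and resolve, is that the intermediate (primed) colors produced as the operators fire are precisely the colors of the admissible tesseract coloring of figure~\ref{proc} appearing in \eqref{cocycle}.
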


The importance of this lemma is not only in that it allows one produce many examples of solutions of the usual (matrix) tetrahedron equation, but also in the following observation: given a map of sets $F:X\times X\to X\times X$, one can ask about the map $\tilde F:X\to X$, obtained by ``substituting the values of the first leg of $F$ as the second argument of $F$''. In other words, let for every $x\in X$ 
\[
F_x:X\to X\times X,\ F_x(y)=F(x,y).
\]
Projecting at the cartesian factors we can write this map as $F_x=(F_{x,1},F_{x,2})$. Then we look for the solution $y_x$ of the equation 
\[
F_{x,1}(y)=y,
\]
and put $\tilde F(x)=F_{x,2}(y_x)$. Below we shall need a more complicated version of this construction for the solution of tetrahedron equation map $\Phi:X\times X\times X\to X\times X\times X$, where we shall obtain in this way a map $\tilde\Phi:X\times X\to X\times X$. Observe, that we can, in fact, perform this operation in $6$ different ways. All these procedures are rather complicated, but for the linear map $A(s)$, the corresponding analogue is given simply by the convolution of a pair of its indices, one upper, and one lower. Clearly there are 6 different ways to do this convolution: for every pair of indices $i,j=1,2,3,\ i\ne j$ we can choose either $i$, or $j$ to be the lower index, and the other way round, just like there are 6 ways to insert one of the values of the map $\Phi$ as an argument into another leg. Let us denote such convolution by $A(s)^i_j$.

Keeping this in mind, \textit{we shall say, that the cocycle $\phi$ is \textbf{completely normalized}, if the convolution of $A(s)$ in any two indices is equal to the identity map}. More accurately, we can say, that \textit{$\phi$ is normalized in direction $(i,j),\ i,j=1,2,3,\ i\ne j$, if $A(s)^i_j=\mathrm{Id}$ and $A(s)_i^j=\mathrm{Id}$}. An example of such partially normalized map is given in section \ref{seq_coc}.

\subsection{2-knots}
\begin{Def}
Here and below \textit{we call a $2$-knot an isotopy class of embeddings $S^2\hookrightarrow {\mathbb R}^4.$}
\end{Def}
A class of examples of non-trivial $2$-knots is given by the Zeeman's twisted-spun knot construction \cite{Zeeman}, which is a generalization of the Artin spun knot and is illustrated at the figure \ref{zee}.
\begin{figure}[h!]
\centering
\includegraphics[width=50mm]{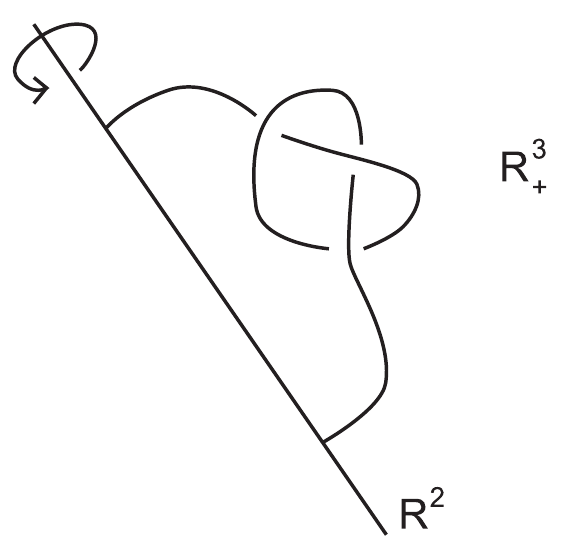}
\caption{Zeeman knot}
\label{zee}
\end{figure}
Here a $k$-times rotation of the long knot in $R_+^3$ around its axis is supposed.

To obtain a diagram of a $2$-knot one takes a generic projection $p$ from $\mathbb R^4$ to a hyperplane $\mathbb R^3\cong P\subset{\mathbb R}^4.$ The generic position means that there are only simple singularities of the composite map, i.e. only of the singularities of the following three types: double point, triple point and the Whitney point (or branch point) see figure \ref{singtypes}.
\begin{figure}[h!]
\centering
\includegraphics[width=3cm]{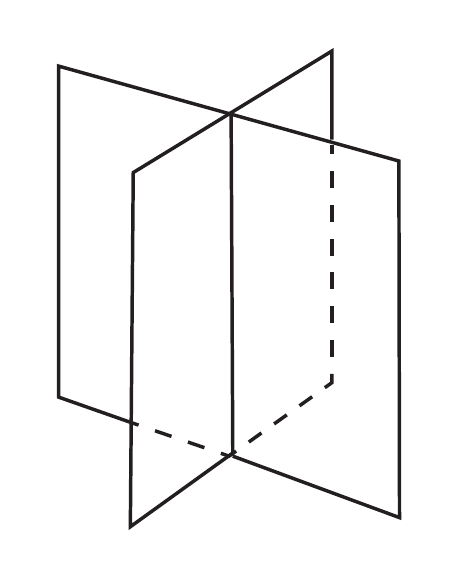}
\includegraphics[width=4cm]{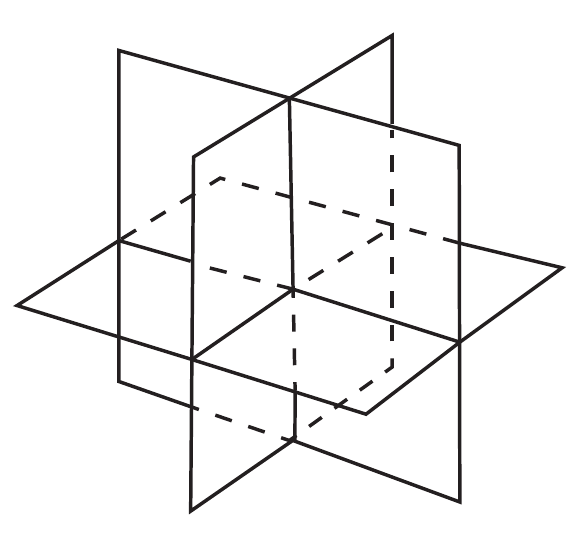}
\includegraphics[width=4cm]{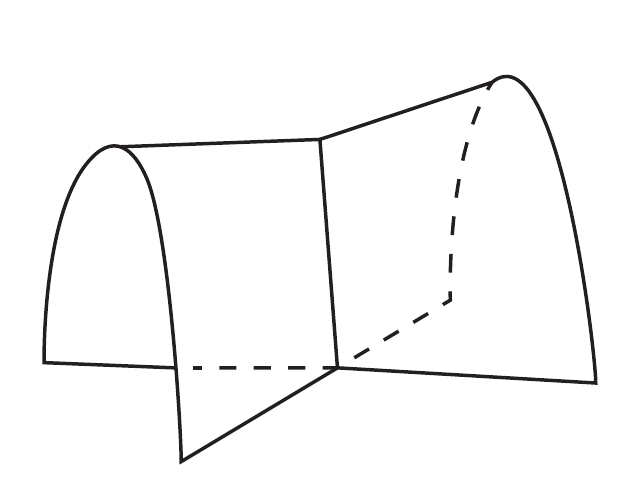}
\caption{Singularity types}
\label{singtypes}
\end{figure}
The diagram of a $2$-knot is a singular surface with arcs of double points which end in triple points and branch points. This defines a graph of singular points. The additional information consists of the order of $2$-leaves in intersection lines subject to the projection direction (i.e. according to the distance from these leaves to the hyperplane $P$). We shall also always assume that the surface we consider is oriented (e.g. the positive normal at every point is given). In particular this allows one associate a sign with every triple point, by saying that the point is positive, if the frame given by positive normals to the 2-dimensional leaves of the diagram that meet in the point, is positive oriented, and the point is negative otherwise. 

We say, that \textit{two diagrams are isotopic}, if they can be connected by a continuous (even smooth) family of diagrams with simple singularities. Of course, in general the same knotted surface in $\mathbb R^4$ can be represented by two non-isotopic diagrams, i.e. two diagrams, which can be connected only by families that contain higher degree singularities. Thus in order to tell, which diagrams represent the same knot, we need to describe admissible higher singularities, that can appear in a generic smooth family of diagrams. The following theorem gives a complete list of such possible transformations; it is an analog of the well-known Reidemeister moves in knot theory:
\begin{Th}[Roseman \cite{Roseman98}]
Two diagrams represent equivalent knotted surfaces iff one of them can be obtained from another by a finite series of moves from the list given in figures \ref{13RM}, \ref{46RM}, \ref{7RM} and an isotopy of a diagram in $\mathbb{R}^3.$ 
\end{Th}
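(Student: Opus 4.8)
This is Roseman's theorem, the surface-diagram analogue of the Reidemeister theorem, so my plan is to run the same singularity-theoretic strategy one dimension higher. First I would dispose of the easy implication: each move in the list can be realised by an ambient isotopy of the knotted surface in $\mathbb{R}^4$ supported in a small ball, so applying any move leaves the isotopy class of the embedding $S^2 \hookrightarrow \mathbb{R}^4$ unchanged. One checks this move by move by exhibiting the supporting local isotopy, and this gives the ``if'' direction. The real content is the converse: given isotopic embeddings $f_0, f_1 : S^2 \hookrightarrow \mathbb{R}^4$ with diagrams $D_0 = \pi \circ f_0$ and $D_1 = \pi \circ f_1$ for a generic projection $\pi$, I want to connect $D_0$ and $D_1$ by Roseman moves and isotopies of a diagram in $\mathbb{R}^3$.

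The second step would be a reduction to a generic one-parameter family. I would pick a smooth ambient isotopy $F : S^2 \times [0,1] \to \mathbb{R}^4$ joining $f_0$ to $f_1$, fix a generic $\pi : \mathbb{R}^4 \to \mathbb{R}^3$, and study the family $g_t = \pi \circ F_t : S^2 \to \mathbb{R}^3$. Using Thom's multijet transversality theorem for one-parameter families, I would perturb $F$ rel endpoints so that $g_t$ is a generic surface map --- carrying only double curves, triple points and branch (Whitney) points --- for all but finitely many values $t_1 < \cdots < t_k$, and so that at each $t_i$ exactly one codimension-one degeneracy of the multijet extension occurs, transversally. Between consecutive critical values the singular image merely moves by an ambient isotopy of $\mathbb{R}^3$, by stability of generic diagrams together with the isotopy extension theorem; so it would remain only to identify the transition across each $t_i$.

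The third step is the local classification. Each $t_i$ carries a codimension-one multigerm of a map from the surface to $3$-space, and by Mather's theory these fall into monogerm degenerations --- the lips, beaks and swallowtail transitions, which govern the birth, death and interaction of branch points --- and multigerm degenerations --- a sheet becoming tangent to another sheet or to a double curve, a sheet passing through a triple point, and the quadruple-point event where four sheets meet at one point. I would match each normal form, together with the over/under height data recorded by $\pi$, to exactly one move from figures \ref{13RM}, \ref{46RM} and \ref{7RM}, the quadruple point furnishing the move of figure \ref{7RM}.

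The hardest part, I expect, is the completeness of this classification: showing that a generic one-parameter family meets \emph{only} these codimension-one strata and no others, i.e. that the stratification of the relevant jet space by singularity type has the expected codimensions and that the listed normal forms exhaust the codimension-one locus. This is where the full Thom--Mather machinery and careful general-position bookkeeping --- including tracking the projection's height function, which refines the purely topological picture --- would be indispensable, and where any gap in the list of moves would surface.
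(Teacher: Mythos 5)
The paper does not prove this statement at all: it is quoted verbatim as Roseman's theorem with a citation to \cite{Roseman98}, and is used as a black box to define which moves a candidate invariant must respect. So there is no proof in the paper to compare yours against; your proposal has to be judged on its own.

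Your outline is the correct and standard strategy --- it is essentially the strategy of Roseman's original argument: the forward direction by exhibiting a supporting local ambient isotopy for each move, and the converse by perturbing a connecting isotopy rel endpoints to a generic one-parameter family, invoking stability plus the isotopy extension theorem between critical times, and classifying the codimension-one events. But as written it is a proof plan rather than a proof, because the step you yourself flag as the hardest --- the exhaustive enumeration of the codimension-one multigerm strata for maps $S^2\to\mathbb{R}^3$ refined by the height data of the projection, and the verification that each stratum realizes exactly one of the seven listed moves --- is the entire content of the theorem. Completeness of the move list cannot be waved at: one must compute the codimensions of all singularity strata in the relevant multijet space, show no other stratum has codimension one, and write down a normal form for each admissible transition together with its effect on the broken-sheet diagram. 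Two smaller points: the monogerm transitions you name (lips, beaks, swallowtail) belong to the theory of apparent contours of surfaces projected to the \emph{plane}; for generic maps of a surface to $3$-space the stable monogerm singularities are immersion points and cross-caps, and the relevant codimension-one monogerm events are births and deaths of pairs of branch points joined by a double arc (these account for the branch-point moves in figures \ref{13RM} and \ref{46RM}). Also, since a diagram is the image \emph{together with} over/under information, the ``stability plus isotopy extension'' step between critical times must be checked to preserve the crossing data, not just the underlying singular set; you gesture at this but it needs to be built into the transversality setup from the start.
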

\begin{figure}[h!]
\centering
\includegraphics[width=35mm]{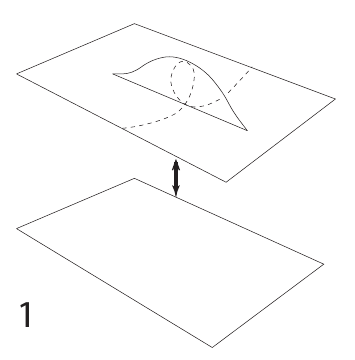}
\includegraphics[width=35mm]{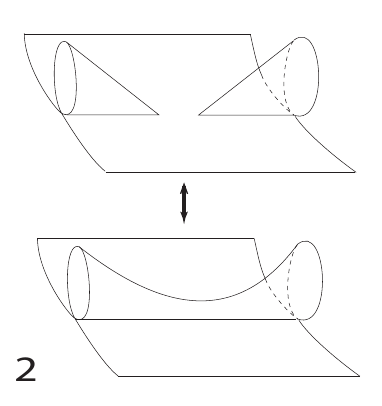}
\includegraphics[width=70mm]{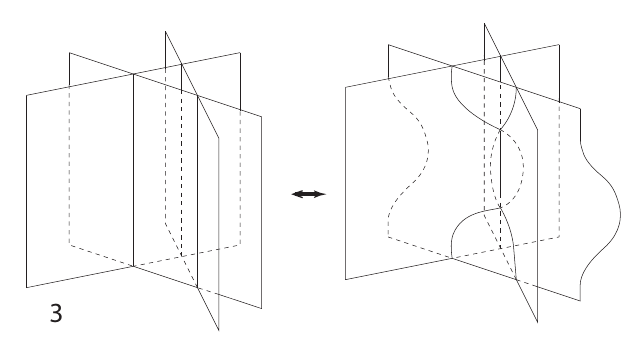}
\caption{1-3 Roseman moves}
\label{13RM}
\end{figure}
\begin{figure}[h!]
\centering
\includegraphics[width=40mm]{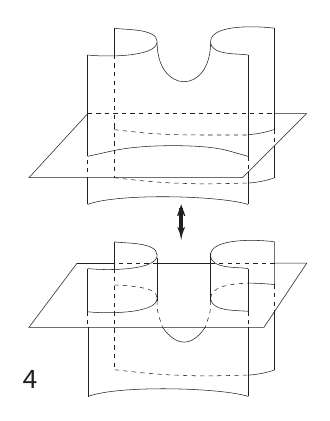}
\includegraphics[width=40mm]{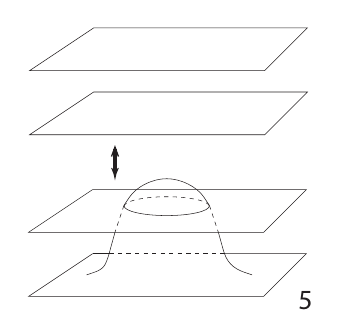}
\includegraphics[width=35mm]{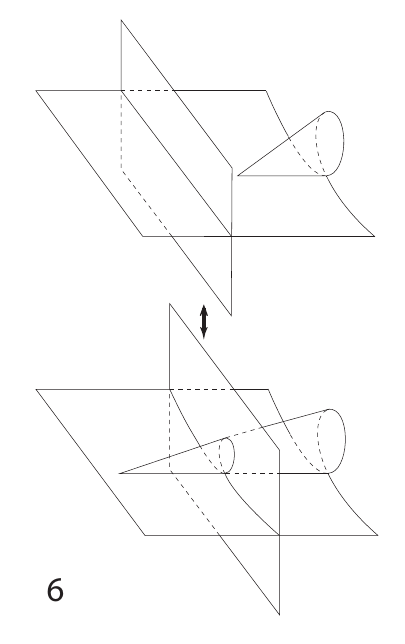}
\caption{4-6 Roseman moves}
\label{46RM}
\end{figure}
\begin{figure}[h!]
\centering
\includegraphics[width=80mm]{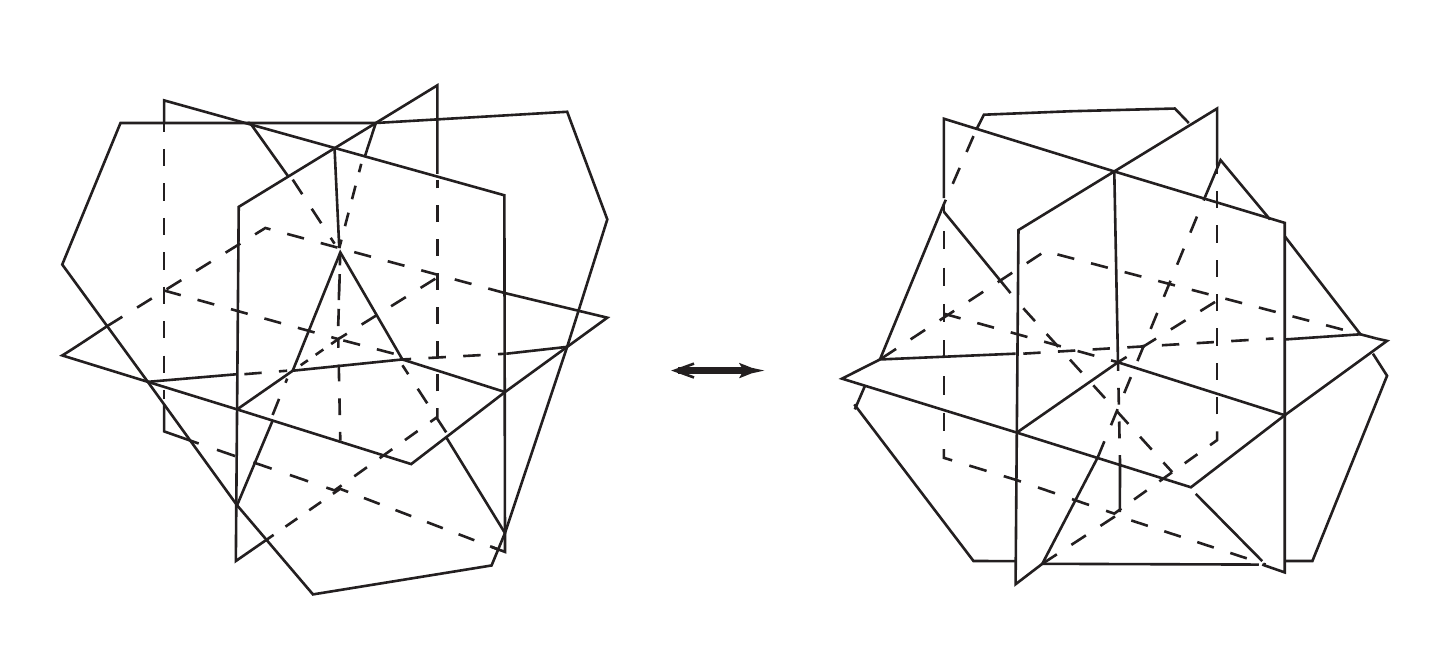}
\caption{7th Roseman move}
\label{7RM}
\end{figure}
\subsection{Quandle cohomology and 2-knot invariants}
There is an approach due to Carter, Saito and others \cite{CSJ} which produces invariants of 2-knots by means of the so called quandle cohomology (more generally, one can use Yang-Baxter maps and cohomology for the same purposes, see \cite{CES}, the there construction is quite similar to the one, based on quandles, so we restrict our attention here only to the latter).

In this theory invariants of 2-knots are constructed as some partition functions on the space of states which are colorings of the 2-leaves of a diagram by elements of a quandle. So let us first recall the definition of \textit{quandle}:
\begin{Def}[Matveev \cite{Mat82}]
A set $X$ with a binary operation $(a,b)\mapsto a\ast b$ is called quandle if
\bea
i)&& \forall a\in X\quad a\ast a=a\nn\\
ii)&& \forall a,b \in X \quad \exists ! c\in X: c\ast b=a\nn\\
iii)&& \forall a,b,c \in X \quad (a\ast b)\ast c=(a\ast c)\ast(b\ast c)\nn
\eea
\end{Def}
\begin{Ex}
The simplest example of such structure is given by a group quandle, i.e. when $X$ is the set of group elements $G$ with the operation $a\ast b=b^{-n} a b^n$  for any fixed $n.$
\end{Ex}
\begin{Ex}
Another important example is the Alexander quandle: $X$ is a $\Lambda$-module $M$, where $\Lambda=\Z[t,t^{-1}]$, with the operation $a\ast b=t a+(1-t)b.$
\end{Ex}
One can define the quandle cohomology as follows. Let us define a complex  $C_n^R(X)$ whose components are free abelian groups generated by $n$-tuples of elements of  $X$ $(x_1,\ldots,x_n).$ Then the differential  $\partial_n:C_n^R(X)\rightarrow C_{n-1}^R(X)$ is:
\bea
\partial_n(x_1,\ldots,x_n)&=&\sum_{i=2}^n(-1)^i\{(x_1,x_2,\ldots,x_{i-1},x_{i+1},\ldots,x_n)\nn\\
&-&(x_1\ast x_i,x_2\ast x_i,\ldots,x_{i-1}\ast x_i,x_{i+1},\ldots,x_n)\}\nn
\eea
Consider a subcomplex $C_n^D(X)$, whose components are generated by $n$-tuples $(x_1,\ldots,x_n)$ with $x_i=x_{i+1}$ for some $i$ and $n\ge 2$. We construct a quotient complex $C_n^Q(X)=C_n^R(X)/C_n^D(X)$ with induced differential. Then the homology and cohomology of quandle $X$ with coefficients in an Abelian group $G$ are by definition, the (co)homology of the complexes:
\bea
C_{\ast}^Q(X,G)=C_{\ast}^Q(X)\otimes G,&&\partial=\partial\otimes id\nn\\
C_{Q}^{\ast}(X,G)=Hom(C_{\ast}^Q(X), G),&&\delta=Hom(\partial, id)\nn
\eea
Now one can define the invariant of a 2-knot as follows: first we shall cut the diagram along the double points set into a collection of non-singular surfaces with boundary in $\mathbb R^3$, saying that whenever two leaves meet in a double point, the lower one is cut by the upper one (lower and higher denote their distances from the plane $P$, see above). These surfaces will be called \textit{2-leaves of the diagram}. The we define the diagram coloring: we denote by $L$ the set of 2-leaves of a diagram after cutting. One says that there is a coloring $C$ of a diagram $D$ with elements of a quandle $Q$ if there is a map $C:L\rightarrow Q$ satisfying the coherence conditions near the intersections of the diagram illustrated by the picture:
\begin{figure}[h!]
\centering 
\includegraphics[width=45mm]{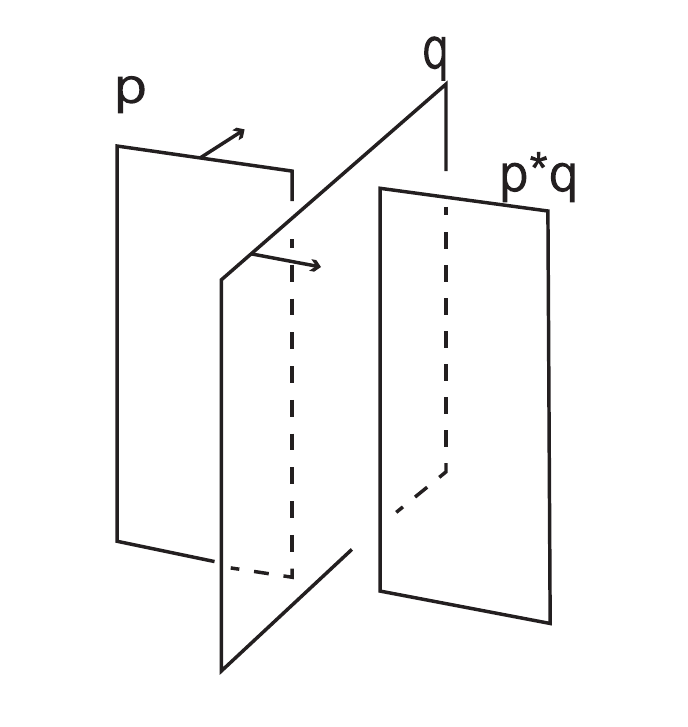}
\caption{Coloring}
\label{coloring}
\end{figure}
Let us fix a $3$-cocycle $\theta\in Z_Q^3(Q,A).$ This implies a condition
\bea
\theta(p,r,s)+\theta(p\ast r,q\ast r,s)+\theta(p,q,r)=\theta(p\ast q,r,s)+\theta(p,q,s)+\theta(p\ast s,q\ast s,r\ast s)\nn
\eea
One attributes the following Boltzmann weight to a triple point $\tau$ 
\bea
B(\tau,C)=\theta(x,y,z)^{\epsilon(\tau)}\nn
\eea
where $\epsilon(\tau)$ is the sign of  the triple point $\tau$ (determined by the normals), $x,y,z$ - colors of the top, middle and bottom leaves in outgoing octant, i.e. such that it is opposite to normals of all leaves. Then one defines the partition function
\bea
\label{quandle-stat}
S(D, \theta,A)=\sum_C\prod_\tau B(\tau,C).
\eea
\begin{Th}[Carter et al., \cite{CSJ}]
The partition function \ref{quandle-stat} is invariant with respect to the Roseman moves and hence is an invariant of a $2$-knot.
\end{Th}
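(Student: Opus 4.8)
The plan is to verify invariance move by move, exploiting that every Roseman move from figures \ref{13RM}, \ref{46RM} and \ref{7RM} alters the diagram only inside a small ball $B\subset\mathbb R^3$, leaving the diagram and therefore every coloring untouched outside $B$. Fix such a move and let $D$, $D'$ be the two local pictures. I need two ingredients: (a) a bijection between the admissible colorings of $D$ and those of $D'$ that restricts to the identity on the leaves crossing $\partial B$; and (b) the equality of local Boltzmann products $\prod_{\tau\subset B}B(\tau,C)=\prod_{\tau'\subset B}B(\tau',C')$ for corresponding colorings $C\leftrightarrow C'$. Granting (a) and (b), the partition function $S(D,\theta,A)=\sum_C\prod_\tau B(\tau,C)$ is unchanged, since the triple points lying outside $B$ keep the same colors and signs under the bijection and hence contribute identical factors term by term in the sum over colorings.

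First I would construct the coloring bijection (a). By the coherence rule of figure \ref{coloring}, the color of the under-leaf of a double arc is $a\ast b$, where $b$ is the over-leaf color and $a$ the incoming under-leaf color; thus a coloring is completely determined by its values on the leaves meeting $\partial B$ together with this propagation rule. Quandle axiom $ii)$ guarantees that the rule can be inverted uniquely, so colors spread consistently across the rearranged sheets inside $B$, while axioms $i)$ and $iii)$ ensure that the result is independent of the propagation path and agrees on $D$ and $D'$. This is the step where I would take most care: for moves involving a branch point one must additionally check that the forced coincidence of two adjacent colors is compatible with the propagation rule.

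For (b) the moves split into several classes. Moves producing or cancelling only double arcs, without any triple point, give an empty local product on both sides and are automatic. Moves that create or annihilate a pair of triple points produce the same two triples of colors with opposite signs $\epsilon(\tau)$, so their weights $\theta(x,y,z)^{+1}$ and $\theta(x,y,z)^{-1}$ cancel in the multiplicative product. The branch-point moves force an evaluation $\theta(\dots,x,x,\dots)$ on a degenerate triple, which vanishes precisely because $\theta$ is taken in the quandle complex $C^\ast_Q$, i.e.\ modulo the degenerate subcomplex $C^D_\ast$. Hence all of these moves preserve $S$.

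The essential content, and the main obstacle, is the tetrahedral move of figure \ref{7RM}, where four sheets are passed through one another so that a cluster of three triple points is replaced by another three. Choosing base colors $p,q,r,s$ for the four sheets and propagating by $a\ast b$, the three triple points of $D$ read off the arguments on the left-hand side of the cocycle relation and those of $D'$ the arguments on the right-hand side; matching each sign $\epsilon(\tau)$ with the orientation of the corresponding normal frame turns the additive identity
\bea
\theta(p,r,s)+\theta(p\ast r,q\ast r,s)+\theta(p,q,r)=\theta(p\ast q,r,s)+\theta(p,q,s)+\theta(p\ast s,q\ast s,r\ast s)\nn
\eea
into the multiplicative equality $\prod_{\tau\subset B}B(\tau,C)=\prod_{\tau'\subset B}B(\tau',C')$. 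Since $\theta\in Z^3_Q(Q,A)$ satisfies exactly this relation, (b) holds for the tetrahedral move as well, and assembling all the cases yields the invariance of $S(D,\theta,A)$.
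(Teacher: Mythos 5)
The paper does not prove this statement at all: it is quoted verbatim from Carter et al.\ \cite{CSJ}, so there is no internal proof to compare against. Your sketch reproduces the standard argument of that reference --- locality of each Roseman move, a bijection of colorings fixed near the boundary of the ball, and a case-by-case check that the local product of Boltzmann weights is preserved, with the tetrahedral move reducing to the quandle $3$-cocycle condition and the branch-point moves handled by the degenerate subcomplex --- and this is the correct strategy. Two small corrections: the local picture of the 7-th (tetrahedral) move contains \emph{four} triple points on each side, not three; the weight of one of them is unchanged by the move and cancels, which is why the remaining contributions match the $3+3$ terms of the cocycle identity. Also, the degenerate weights $\theta(\dots,x,x,\dots)$ do not ``vanish'' in the multiplicative state sum --- they equal the identity element of $A$ (they vanish only in the additive notation of the complex); if they were literally zero the whole product would collapse. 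Finally, be aware that the full proof in \cite{CSJ} spends most of its effort on the part you defer, namely verifying the coloring bijection and the sign/ordering conventions move by move, so as written your text is an outline of the proof rather than the proof itself.
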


\section{Quasi-invariants of $2$-knots}
\subsection{Statistical model}
\label{seq_coc}
In this section we are going to construct the principal object of the present paper: the function $\chi(D)$ (where  $D\looparrowright\mathbb R^3$ is the diagram of a 2-knot $\Sigma\hookrightarrow\mathbb R^4$), which is invariant under certain Roseman moves. We shall assume, that the knot is oriented (in the sense, that the embedded surface is oriented). In fact, one can think that this is just a sphere $S^2$. We shall also fix an orientation of the Euclidean space $\mathbb R^3$.

We begin with the following observations (similar to the discussion, given in the section about quandle invariants, see above): for every diagram of a 2-knot, the double-point set is an embedded graph in $\mathbb R^3$ of a very special kind; all its vertices have degree 6 or 1, where vertices of degree 6 correspond to the triple points of the diagram, and vertices of degree 1 are the singular points (``summits'' of Whitney's umbrellas). Moreover, for every vertex of degree 6, one can divide the adjacent edges into pairs (we shall call a pair of edges \textit{a line}), each of which can be regarded as a single branch of the double points-set.

Further, we choose a unit normal vector $\vec n_x$ at every double point $x\in D$ so that the orientation of the frame $\vec e_1,\,\vec e_2, \vec n_x$ (where $\vec e_1,\,\vec e_2$ is an oriented base of $D$ at $x$) in $\mathbb R^3$ is positive. Now since the knot and $\mathbb R^3$ is oriented, one can choose direction for every edge $E$ of the double points graph: every edge $E$ is equal to the intersection of two ``branches'' $F_1$ and $F_2$ of the surface, so that at every point $x\in E$ we have two normal vectors $\vec n_1\,\vec n_2$, determined by these branches. Recall now, that the diagram $D$ is equal to the image of $\Sigma$ under projection $p:\mathbb R^4\to\mathbb R^3$, where $\mathbb R^3$ is a suitable hyperplane. Thus one can speak about the distance from a point $x\in\Sigma$ to the plane. So, for every double point $x$ we know, which branch is nearer to the image, and which one is farther. Let $F_1$ be the nearer one, and $F_2$ the farther one; we assume that $\vec n_1$ will always precede $\vec n_2$. Now we can choose the direction $\vec v$ of the edge $E$ (one should think of $\vec v$ as of a unit vector along the edge) so that the frame, formed by $\vec n_1,\,\vec n_2$ and $\vec v$ is positive oriented. Observe, that this orientation does not change, when one passes through a triple point.

Finally we introduce the orientation of triple points of the diagram. As we have discussed it before, one can associate with it a sign. Namely, let $F_1,\,F_2$ and $F_3$ be the branches of the diagram, meeting in the triple point $O$; they are ordered according to the distance from the image. We shall order the lines, that meet in $O$ so that the line, transversal to the $i$-th branch has number $i$. We further introduce the sign of the point $O$: it is $+1$, if the normal vectors $\vec n_1,\,\vec n_2,\,\vec n_3$ to the intersecting branches, taken in the order, prescribed by the distances from the plane $\mathbb R^3$, give a positive-oriented frame, and it is $-1$ otherwise; we shall denote the orientation by $\sigma(O)$. Observe, that we obtain the same orientation, if we use the directions along the lines, meeting in $O$, ordered as explained before.

Now we can describe the construction of the semi-invariant function $\chi$. As a matter of fact, it can be defined in terms of the double-points graph of the diagram. So, suppose we are given a graph $\Gamma\subset\mathbb R^3$ with vertices of degrees $1$ and $6$, such that the edges at every point of degree 6 are divided into three lines, and each line is oriented. We also choose the order of lines at a triple point and orient the point as before.

Let $(X,\,\Phi)$ be a solution of the set-theoretical tetrahedron equation on a finite set $X$ and $h=\#X$ be its cardinality (infinite sets ask for additional assumptions); we shall assume that $\Phi$ is invertible. We begin with definition of a \textit{coloring} of the graph $\Gamma$ into $X$ colors. 
\begin{Def}
We shall say, that a \textbf{regular coloring} (or just a \textbf{coloring}) $c(\Gamma)$ of $\Gamma$ is given, if there is an element $x_e\in X$ assigned to every edge $e$ of $\Gamma$ so that for every degree $6$ point $O\in\Gamma_6$ the elements assigned to its outgoing edges are determined by the elements on incoming edges by the rule:
\[
(x',y',z')=\Phi^{\sigma(O)}(x,y,z).
\]
\end{Def}
Observe, that in the case, when $\Phi^{-1}=\Phi$ orientation of the vertex does not interfere into the definition.

Let now $\phi$ be a multiplicative $3$-cocycle in the tetrahedral cohomology with coefficients in the field of real (or complex) numbers, we have introduced earlier. Then we associate to every triple point $O$ of the diagram the number $\phi^{\sigma(O)}(x_O,y_O,z_O)$, where $(x_O,y_O,z_O)$ are the colors of incoming edges (the order given by the order of the edges), if $\sigma(O)=1$ and they are the colors of outgoing edges otherwise.

Finally let $s$ be a (real or complex) number, we put
\begin{equation}
\label{eq:definv}
\chi_\phi(s;\Gamma)=h^{-d}\sum_{c(\Gamma)}\prod_{O\in\Gamma_6}\phi^{\sigma(O)s}(x_O,y_O,z_O),
\end{equation}
Here $d$ is the number of connected components of the graph of singular points of $\Gamma(D)$ (and $h$ is the cardinality of $X$, see above).
\begin{Def}
The quasi-invariant $\chi_\phi(s;\Sigma)$ of a two-dimensional knot $\Sigma$ is given by $\chi_\phi(s;\Gamma(D))$, where $D$ is a diagram of the knot and $\Gamma(D)$ denotes the double points graph of $\Sigma(D)$.
\end{Def}

In what follows, we shall need if defined \textit{partly-transposed operators}, associated with the tetrahedron map $\Phi$: for instance, we shall say, that 
\[
\Phi^{t_2}(x,y,z)=(x',y',z'),
\]
if 
\[
\Phi(x,y',z)=(x',y,z').
\]
One can consider similar partly-transposed operators for arbitrary direction or combination of directions; we shall denote them by $\Phi^{t_i}$ or $\Phi^{t_it_j},\ i,j=1,2,3$. Clearly such operators do not exist for all $\Phi$; thus we introduce an additional condition on $\Phi$: \textit{we shall assume that all partly-transposed operators exist}. Observe, that $\Phi^{t_1t_2t_3}=\Phi^{-1}$; also observe that the identity $(\Phi^{t_i})^{t_i}=\Phi$ implies that $(\Phi^{t_i})^{-1}=\Phi^{t_jt_k},\ \{j,k\}=\{1,2,3\}\setminus \{i\}$.

\begin{Prop}
Let $X$ be a finite set and $\Phi:X^{\times 3}\to X^{\times 3}$ be a solution of the tetrahedron equation; let also $\phi\in C_3(X)$ be a multiplicative 3-cocycle. Then the expression $\chi_\phi(s;\Sigma)$ does not change when the diagram is modified by the 1-st, 3-rd, 5-th, and 7-th Roseman moves. If, in addition, the cocycle is normalized in the directions $(1,2)$ and $(2,3)$ (in particular, if it is completely normalized), then this value is not changed by the 6-th Roseman move.
\end{Prop}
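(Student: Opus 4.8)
The plan is to exploit the locality of the Roseman moves: each move is supported in a small ball $B\subset\mathbb R^3$, outside of which the double-point graph $\Gamma(D)$ and all its colorings are unchanged. Fixing the colors of the edges that cross $\partial B$, I would compare the two local contributions to the sum (\ref{eq:definv}) — the sum over the internal colorings of the product of the weights $\phi^{\sigma(O)s}$ attached to the triple points inside $B$, together with the accompanying power of $h$ coming from $h^{-d}$. Invariance of $\chi_\phi$ under a given move is then equivalent to the equality of these two local contributions for every admissible boundary coloring. The partly-transposed operators $\Phi^{t_i}$ introduced just above are exactly what is needed to read off, for each orientation of the edges meeting $\partial B$, the relation between the internal colors and the fixed boundary colors, and to guarantee (by invertibility of $\Phi$ and of its partial transposes) that the admissible internal colorings are parametrized in the same way on the two sides of the move.

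The $7$-th (tetrahedral) move is the cleanest and I would treat it first. On either side the local picture consists of four triple points arranged as the two colorings of the tesseract of Figure \ref{proc}; the set-theoretic tetrahedron equation satisfied by $\Phi$ furnishes a canonical bijection between admissible colorings on the two sides (this is precisely the statement that the two ways of coloring the $4$-cube agree), and under this bijection the product of the four weights on the left equals the product on the right, because this identity, after dividing out the exponent $s$, is nothing but the multiplicative cocycle condition (\ref{cocycle}). Since neither the number of triple points nor the number of connected components $d$ changes, the two local contributions coincide and $\chi_\phi$ is unchanged.

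For the $1$-st, $3$-rd and $5$-th moves the mechanism is elementary: each either leaves the set of triple points untouched, or introduces a pair of triple points of opposite signs $\sigma(O)$ whose color-arguments coincide, so that their weights $\phi^{s}$ and $\phi^{-s}$ cancel for every fixed boundary coloring. The only real work is bookkeeping: a move may alter the connectivity of $\Gamma$ or create an edge whose color is unconstrained, and one must check that each factor of $h$ arising from summing over such a free color is compensated by the matching change in $h^{-d}$. I would verify, move by move, that the bijection of boundary colorings (again supplied by invertibility of $\Phi$ and the $\Phi^{t_i}$) is weight-preserving and that the powers of $h$ balance; both are routine once the local coloring relations have been written down.

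The $6$-th move is where the normalization hypothesis enters, and it is the step I expect to be the main obstacle. Here the two sides differ by a local reconfiguration in which one internal edge must be summed over; carrying out that sum produces, for the linear avatar $A(s)$ of $\phi$ from Lemma \ref{lem_lin}, precisely a convolution $A(s)^i_j$ in one pair of indices. The two local partition functions then agree exactly when this convolution is trivial, and tracing through the orientations of the lines and the ordering at the two triple points involved shows that the relevant directions are $(1,2)$ and $(2,3)$. The delicate points are the consistent tracking of the signs $\sigma(O)$ and of the line-ordering (so that the correct index pair is identified) and the verification that the summation really factors as the asserted convolution; once $A(s)^i_j=\mathrm{Id}$ is invoked in these two directions, invariance under the $6$-th move follows. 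Finally, I would note that the $2$-nd and $4$-th moves admit no such cancellation — the analogous local sum does not collapse to the identity under any of the available hypotheses — which is precisely why $\chi_\phi$ is only a quasi-invariant.
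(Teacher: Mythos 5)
Your overall strategy --- localize each move in a small ball, fix the colors crossing the boundary, and compare the two local contributions to (\ref{eq:definv}) --- is the same as the paper's, and your treatment of the 1st, 3rd, 5th and 6th moves matches the paper's arguments: the $h^{-d}$ bookkeeping for the 1st and 5th moves, cancellation of the two oppositely signed triple points (related by $\Phi$ and $\Phi^{-1}$, possibly partly transposed) for the 3rd move, and the reduction of the 6th move to triviality of the convolutions $A(s)^i_j$ in the directions $(1,2)$ and $(2,3)$, with the Whitney-umbrella constraint explaining why only those two directions are needed.

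The genuine gap is in the 7th move. You treat only the standard configuration, in which the bijection of colorings is literally the set-theoretic tetrahedron equation and the weight identity is literally the cocycle condition (\ref{cocycle}). But the move occurs with arbitrary orientations of the sheets and an arbitrary ordering of the sheets by distance to the projection hyperplane; these data determine the signs $\sigma(O)$, the directions of the edges and the ordering of the lines at each triple point, and in a non-standard configuration the identity that must hold is \emph{not} the STTE but a transformed version involving partly-transposed operators and inverses --- for instance $\Phi^{t_3}_{123}\Phi^{t_5}_{145}\Phi^{t_6}_{246}\Phi_{356}=\Phi_{356}\Phi^{t_6}_{246}\Phi^{t_5}_{145}\Phi^{t_3}_{123}$ when the orientation of the sheet $356$ is reversed, and a further variant when two neighbouring sheets exchange their order. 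Deriving each such identity from the tetrahedron equation is the bulk of the paper's proof: it uses the lemma on partial inversion of compositions, $\left(f(g\times\mathrm{Id}_Y)\right)^{t_y}=f^{t_y}(g\times\mathrm{Id}_Y)$, applied repeatedly to the six-fold product, and then the matrix analogue for $A(s)$ via $(A^{t_i})^{-1}=(A^{-1})^{t_i}$ (Lemma \ref{lem3}) to obtain equality of the weights. Your setup asserts that ``the admissible internal colorings are parametrized in the same way on the two sides of the move'' by invertibility of $\Phi$ and its partial transposes, but that is exactly the statement requiring proof configuration by configuration; it does not follow from invertibility alone. As written, your argument covers only one of the many versions of the 7th move.
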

First of all observe that the invariance with respect to the 1-st and 5-th moves is quite obvious; it follows from the fact that these moves change the number of connected components, thus the sum in the definition of value of $\chi_\phi(s;\Sigma)$ is just multiplied by the cardinality of the set $X$, so that the final result is preserved, since the first factor is divided by $h$. 

\subsection{The 6-th move}
In order to prove the invariance of the function $\chi_\phi(s;\Sigma)$ with respect to the 6-th move for normalized cocycles, it is enough to observe, that in the transformed diagram two of the three lines at the triple point are connected by a loop. Thus, in the sum $\chi_\phi(s;\Sigma)$ we shall have to add the terms, corresponding to the coloring of the new triple point, colored by the map $\tilde\Phi$, obtained from $\Phi$ by ``substituting the value of one of the legs of $\Phi$ as the argument into another'' (see remarks in the end of the section \ref{sectimp}), see figure \ref{6-move}. When we pass to the level of $\chi_\phi(s;\Sigma)$, these terms correspond to the multiplication of previous expressions by matrix elements of the map $A(s)^i_j$, where the convolution is taken over the indices, corresponding to the loop. Thus, the sum does not change, when the cocycle is completely normalized. More accurately, since the vertical 2-dimensional sheet cannot go between the other two branches, that meet in this triple point (these branches are, in fact, parts of the same sheet forming the Whitney umbrella), it is enough to ask only that $\phi$ should be normalized in the directions $(1,2)$ and $(2,3)$. In both case, the factor associated with the triple point is equal to 1, so the partition function is equal to the one, when there is no triple point, see figure \ref{6-move22}.

\begin{figure}[h!]
\centering
\includegraphics[width=6cm]{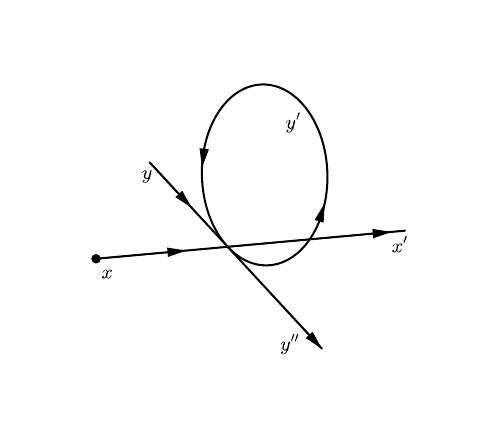}
\caption{The loop in the tetrahedral map}
\label{6-move}
\end{figure}

Let us give an example of normalized cocycles. First of all we recall that according to lemma 6 of \cite{KST} the electric solution can be used to produce the solutions of STTE on finite sets. To this end consider the residue ring~$\Z/p^k\Z$, where $p$ is either a prime number of the form $p=4l+1$, or $p=2$, and $k$ is an integer $\ge 2$. For such~$p$ the Legendre symbol $\left(\frac{-1}{p}\right)$ is equal to $1$, that is, there exists a square root of $-1$ in~$\Z/p\Z$. We fix one such root and call it $\varepsilon\in \Z/p\Z$. Our~$X$ will be the following subset of~$\Z/p^k\Z$:
\begin{equation}\label{X}
X=\{x \in \Z/p^k\Z\,\colon\;\; x=\varepsilon \mod p\}.
\end{equation}
Then the electric solution restricts well to this subset. We also stress the fact that $X$ is closed under inversion of elements in $\Z/p^k$. Let us analyze the set of colorings of a diagram at picture \ref{6-move} with respect to such a reduced electric solution. We start with the order $(x,y,y').$ The condition that should be verified at this triple point is:
\bea
 x'&=&xy/(x+y'+xyy'),\nn\\
 y'&=&x+y'+xyy',\nn\\
 y''&=&yy'/(x+y'+xyy').\nn
 \eea
 Solving the second equation one obtains $yy'=-1.$ Then substituting this to the other equations entails: $x'=-xy^2, y''=y.$ Another nonequivalent configuration corresponds to the order $(x,y',y).$ The following is the condition for the coloring in this case:
 \bea
 x'&=&xy'/(x+y+xyy'),\nn\\
 y''&=&x+y+xyy',\nn\\
 y'&=&yy'/(x+y+xyy').\nn
 \eea
Solving the third equation one obtains $yy'=-1.$ Then substituting this to the other equations gives us: $x'=-xy^{-2}, y''=y.$ Other cases are equivalent to one of these due to the symmetry of the electric solution with respect to the change of variables $1\leftrightarrow 3.$ This calculation shows that the space of colorings of a diagram \ref{6-move} is isomorphic to the space of colorings corresponding to the left hand side of the 6-th move.
\begin{figure}[h!]
\centering
\includegraphics[width=8cm]{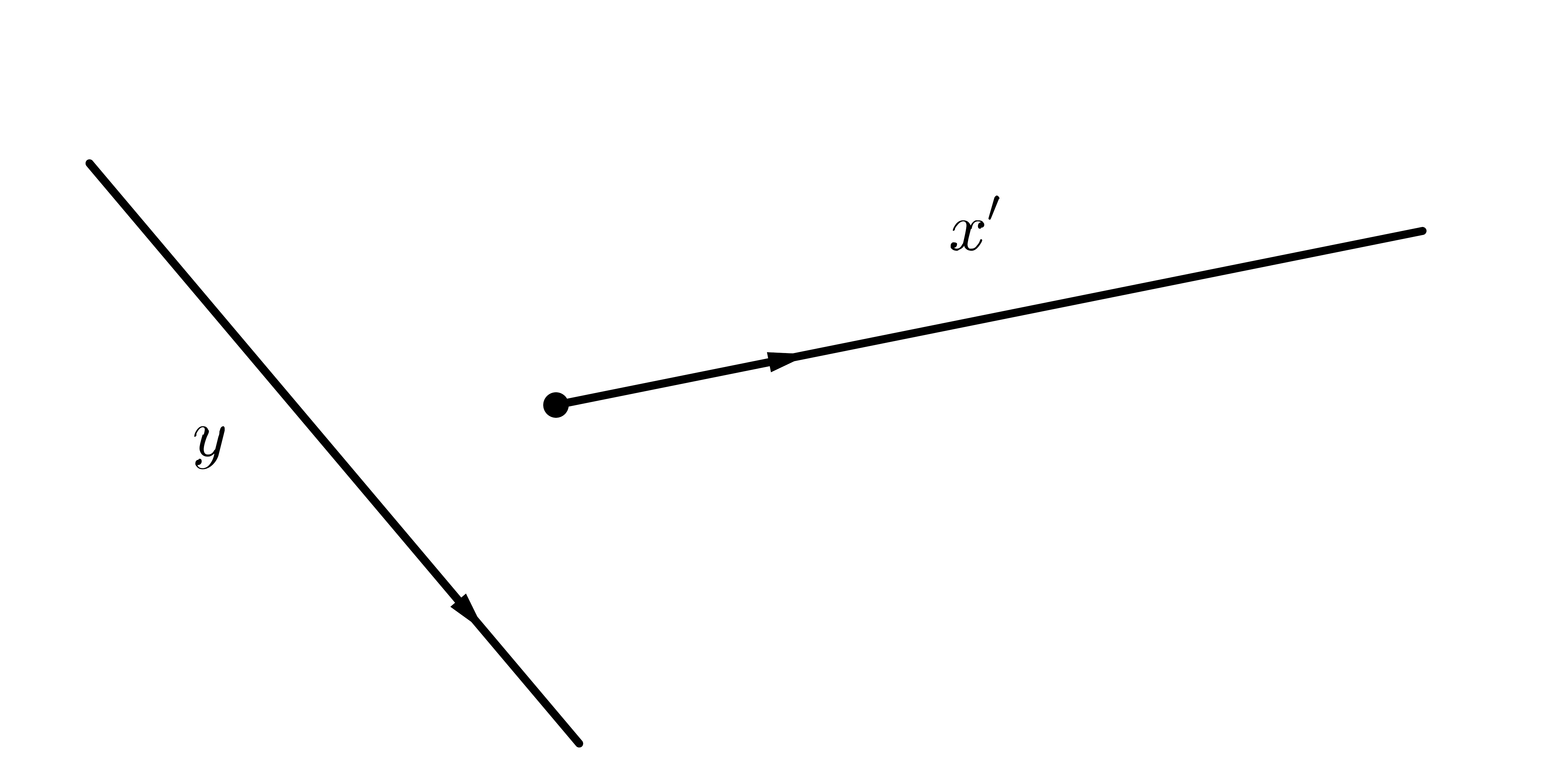}
\caption{Left hand side of the 6-th move}
\label{6-move22}
\end{figure}
Moreover we see that the cocycle $-c_1 c_2$ of lemma 4 of \cite{KST} composed with a character from $\Z/m^\Z$ to $\CC$ or $\R$ is nothing but $-y y'=1$ in both cases. This shows that the partition function remains invariant.

The rest of the proof is concerned with the invariance with respect to the 3-rd and the 7-th moves. It is based on several statements, which we collect in following subsections. 

\subsection{The 3-rd move}

As far as the third move (see figure 6) is concerned, the invariance follows from a simple inspection of definitions. Let us call the colors of the incoming arrows at the diagram \textit{initial colors} and the colors of the outgoing edges \textit{transformed colors}. Now, we can show, that the number of regular colorings does not change, when the graph is changed by the third move, simply by showing that for every set of initial colors there is a unique coloring on both sides of the transformation so that the sets of transformed colors on both sides coincide.

Now it is clear that the signs of two triple points at the right hand side of the picture are opposite, so up to the substitution of $\Phi^{-1}$ for $\Phi$ there are two possibilities: either all three edges of the diagram are oriented in the same way (for instance, from top to bottom on the corresponding picture), or two of them have the same direction, and the third one is opposite. This property does not change, when we pass to the deformed picture (where there are two triple points). In the first case we obtain the following coloring of the deformed graph:
\[
(x,y,z)\stackrel{\Phi}{\to}(x',y',z')\stackrel{\Phi^{-1}}{\to}(x,y,z),
\]
and in the second case we obtain the same coloring, but for transposed map; e.g.:
\[
(x,y,z)\stackrel{\Phi^{t_2}}{\to}(x',y',z')\stackrel{\left({\Phi}^{-1}\right)^{t_2}}{\to}(x,y,z).
\]
(The index of transposed direction depends on the order of the branches.) Thus, there is one regular coloring of the deformed diagram, associated with every coloring $(x,y,z)$ of the undeformed diagram.

For the same reason the partition function $\chi_\phi$ of the knot does not change, when the diagram is transformed in accordance with the third move: the value of the cocycle $\phi$ at the new emerging triple points are equal, and come into the formula with opposite signs, so they eliminate each other.

\subsection{The 7-th move}
In this case the transformation of graph is more tricky, so we shall need to work more. As before, we begin with proving that when the tetrahedron map $\Phi$ verifies the conditions, mentioned here, the number of regular colorings does not change, when the graph is changed by the seventh Roseman move.

In our reasonings we shall extensively use the next lemma, describing the basic properties of partial inversion operation:
\begin{Lem}
Let $X$ and $Y$ be two finite sets and $f:X\times Y\to X\times Y,\ g:X\to X$ are two bijections. As above, we shall call the map $f^{t_y}:X\times Y\to X\times Y$ for which   $f^{t_y}(x,y)=(x',y')$ iff $f(x,y')=(x',y)$, (partial) $Y$-inversion of $f$. Then there exists $Y$-inversion of $f$ iff there exists $Y$-inversion of $f(g\times 1)$; in this case we have:
\[
(f(g\times\mathrm{Id}_Y))^{t_y}=f^{t_y}(g\times\mathrm{Id}_Y).
\]
\end{Lem}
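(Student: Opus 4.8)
The plan is to reduce everything to unwinding the defining relation of partial $Y$-inversion, exploiting only that $g$ acts bijectively on the $X$-factor while leaving the $Y$-factor untouched. First I would record the existence criterion explicitly. Writing $f=(f_1,f_2)$ with $f_1\colon X\times Y\to X$ and $f_2\colon X\times Y\to Y$, the relation $f^{t_y}(x,y)=(x',y')$ iff $f(x,y')=(x',y)$ means that, for a given $(x,y)$, one must solve $f_2(x,y')=y$ for $y'$ and then set $x'=f_1(x,y')$. Hence the $Y$-inversion $f^{t_y}$ exists as a well-defined map exactly when, for every fixed $x\in X$, the partial map $y'\mapsto f_2(x,y')$ is a bijection of $Y$; in that case $y'$, and therefore $x'$, are uniquely determined.

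Next I would establish the equivalence of the two existence statements. Abbreviate $F=f\circ(g\times\mathrm{Id}_Y)$, so that $F(x,y)=f(g(x),y)$ and in particular its second component is $F_2(x,y)=f_2(g(x),y)$. For fixed $x$ the partial map $y'\mapsto F_2(x,y')=f_2(g(x),y')$ is a bijection precisely when $y'\mapsto f_2(g(x),y')$ is; and as $x$ ranges over $X$ with $g$ a bijection of $X$, the argument $g(x)$ ranges over all of $X$. Thus the family of column-conditions for $F$ coincides with the family of column-conditions for $f$, and by the criterion of the first step $F^{t_y}$ exists iff $f^{t_y}$ exists, which is the first assertion.

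Finally, assuming existence, I would verify the identity pointwise. Fix $(x,y)$ and let $F^{t_y}(x,y)=(x',y')$; by definition this means $F(x,y')=(x',y)$, i.e. $f(g(x),y')=(x',y)$. On the other hand $\bigl(f^{t_y}\circ(g\times\mathrm{Id}_Y)\bigr)(x,y)=f^{t_y}(g(x),y)$, and setting $f^{t_y}(g(x),y)=(x'',y'')$ the defining relation for $f^{t_y}$ gives $f(g(x),y'')=(x'',y)$. The two relations $f(g(x),y')=(x',y)$ and $f(g(x),y'')=(x'',y)$ are one and the same equation in the unknown pair, their right-hand sides sharing the second entry $y$; the uniqueness from the first step then forces $(x',y')=(x'',y'')$. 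Since $(x,y)$ was arbitrary, this yields $(f(g\times\mathrm{Id}_Y))^{t_y}=f^{t_y}(g\times\mathrm{Id}_Y)$.

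I expect no serious obstacle: the entire content is careful bookkeeping of which coordinate serves as input and which as output in the relation $f(x,y')=(x',y)$. The one point genuinely deserving attention is that the pre-composition is by $g\times\mathrm{Id}_Y$ rather than by a map mixing the two factors. It is exactly because $g$ touches only the $X$-slot, the very slot passed through unchanged in the defining relation of $Y$-inversion, that the relation factors through cleanly; had $g$ instead acted on the $Y$-factor, both the equivalence of existence and the clean commutation formula would fail.
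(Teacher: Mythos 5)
Your proof is correct and follows essentially the same route as the paper's: both decompose $f$ into components, identify existence of $f^{t_y}$ with bijectivity of $y'\mapsto f_2(x,y')$ for each fixed $x$, and then verify the formula by uniqueness. Your handling of the existence equivalence (noting that $g(x)$ sweeps out all of $X$) is a slightly cleaner packaging of what the paper does by substituting $g^{-1}$ for $g$, but it is not a different argument.
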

\begin{proof}
One can regard $f$ as a map $F:X\to\mathrm{Inj}(Y,\,X\times Y)$ where on the right we have the set of injective maps from $Y$ int the cartesian product $X\times Y$: put $F(x)(y)=f(x,y)$. Composing maps in $\mathrm{Inj}(Y,\,X\times Y)$ with projections, we can write $F(x)=(F(x)_1,F(x)_2)$, where $F(x)_1:Y\to X,\ F(x)_2:Y\to Y$. In this notation 
\[
f(x,y)=(F(x)_1(y),F(x)_2(y)).
\]
An evident necessary condition for the existence of $f^{t_y}$ is the bijectivity of $F(x)_2$ for all $x\in X$: if this is not true, there will either exist $x$, for which one can find $y\in Y$ such that $f(x,y')\ne(x',y)$ for all $y'$, or an $x$, such that there exists $y\in Y$ for which this equation has more than one solution. On the other hand, in case $F(x)_2$ is invertible for all $x$, we put 
\[
f^{t_y}(x,y)= (F(x)_1(F^{-1}(x)_2(y)),F^{-1}(x)_2(y)).
\] 
It is easy to check that for this map the equality which defines the partial inversion holds. Now the equation $(f(g\times\mathrm{Id}_Y))=f^(g\times\mathrm{Id}_Y)$ is checked by a straightforward computation. The opposite claim follows from the above if we replace $g$ with $g^{-1}$ and $f$ with $f(g\times\mathrm{Id}_Y)$.
\end{proof}
As a straightforward corollary of this lemma, by applying it several times to various factors in the 6-fold cartesian product, on which the tetrahedron equation is defined, we obtain equalities, similar to the following one:
\[
\begin{aligned}
(\Phi^{-1})_{356}\Phi^{t_3}_{123}\Phi^{t_3}_{145}\Phi^{t_3}_{246}&=(\Phi_{123}\Phi_{145}\Phi_{246}\Phi_{356})^{t_3t_5t_6}\\
&=(\Phi_{356}\Phi_{246}\Phi_{145}\Phi_{123})^{t_3t_5t_6}=\Phi^{t_3}_{246}\Phi^{t_3}_{145}\Phi^{t_3}_{123}(\Phi^{-1})_{356}.
\end{aligned}
\]
Observe, that the order of composition changes, since we replace $\Phi_{356}$ by its inverse.

Now the rest of the reasoning is based on the following observations: first of all, one can choose the orientations of the planes, forming the tetrahedron so that the corresponding graph of double points will have the standard orientation (see the left hand side of figure \ref{7-th}): to this end we should choose the normals of the right and left triangles at this picture point outwards, while the remaining two normals will point inward.
\begin{figure}[h!]
\centering
\includegraphics[width=15cm]{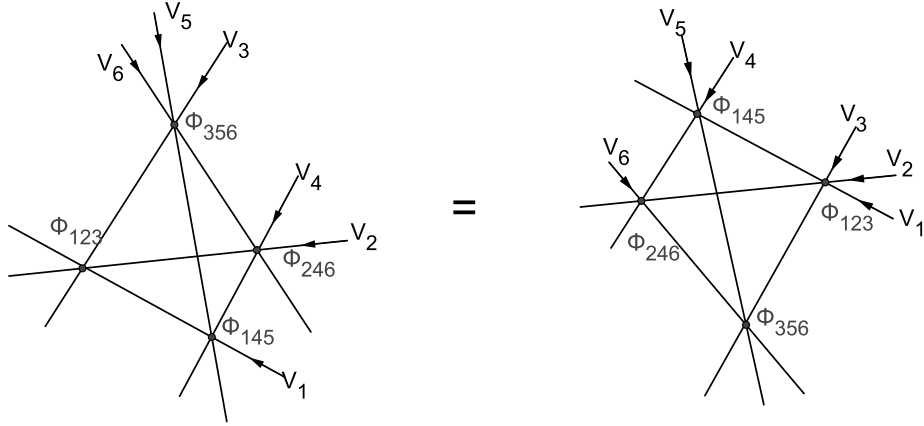}
\caption{7-th move, standard configuration}
\label{7-th}
\end{figure}
As before we call the colors of the incoming arrows \textit{initial colors} and the colors of the outgoing edges \textit{transformed colors}. It is clear, that for every set of 6 initial colors there exists a unique way to choose the colors of all the other edges of the graph at both sides of the figure \ref{7-th}. Thus in this case the bijectivity between the set of colorings before and after the move is provided by the tetrahedron equation. Let us now fix the configuration data in this case: all vertices carry the sign $+$, the order of edges is canonical, i.e. $V_1>V_2>\ldots>V_6$, this corresponds to the lexicographic order of 2-dimensional branches if we mark a branch by the triple of edges transversal to it: $123, 145, 246, 356.$

All other configurations differ from this one by a series of mutations of two types: either switching the orientation of a branch, or by changing the order of two neighbor branches. Suppose for example, that we switch the orientation of the first branch (i.e. the face $356$, containing the edges $1,\,2$ and $4$ at the picture). Choosing the orientations as described in the beginning of previous section, we see that the vertices $123,145,246$ will change their signs and the edges $1,2$ and $4$ will change their direction. The resulting configuration is shown at figure \ref{alt-conf1}.
\begin{figure}[h!]
\centering
\includegraphics[width=15cm]{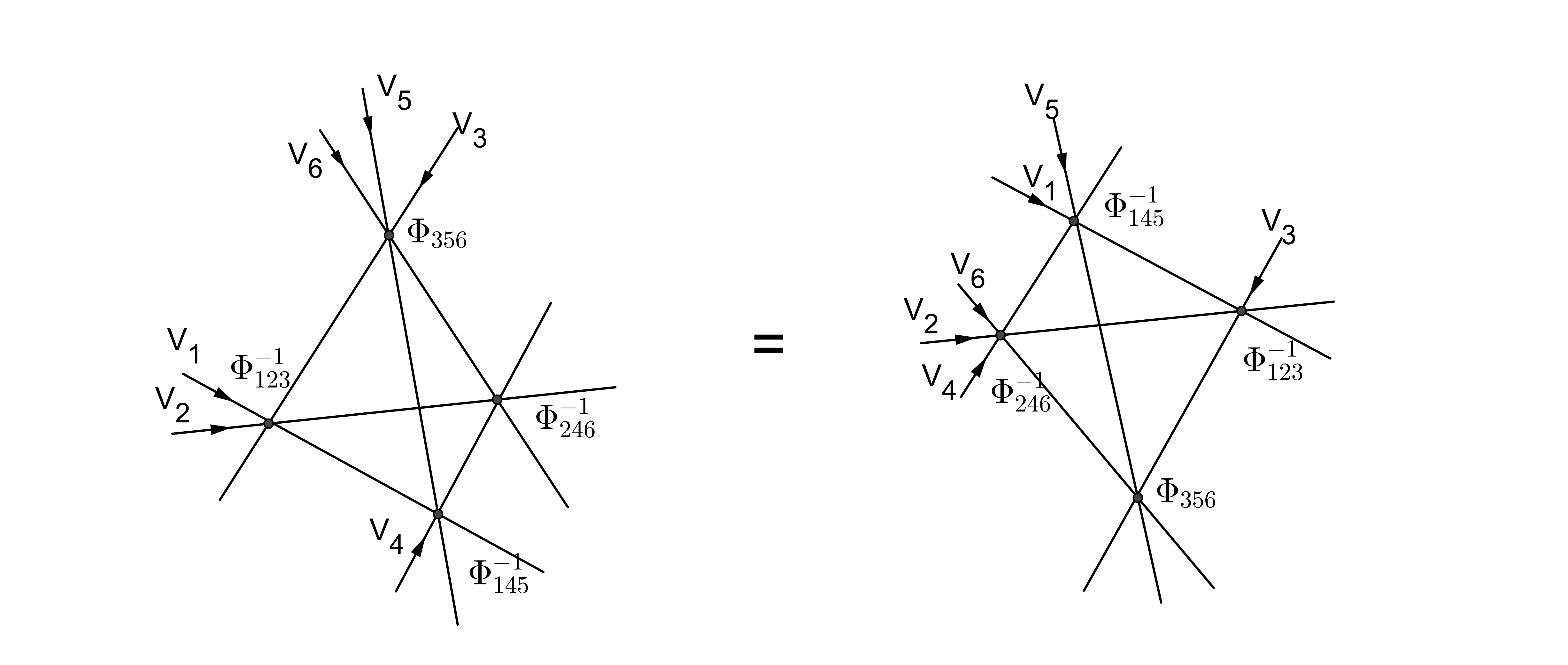}
\caption{7-th move, transposition 356}
\label{alt-conf1}
\end{figure}
Now the statement we need shall follow, if we prove that for any set of initial colors there exists a unique way to color all the edges on both sides of this figure so that the transformed colors are the same. Using the notion of partial inversions we can formalize this condition as a relation on maps from the 6-tuple of initial colors to the 6-tuple of the transformed colors (here we use the notation of the canonical configuration \ref{7-th} and preserve the names of cartesian factors from the original 6-tuple on the triples substituted into $\Phi$):
\[
(\Phi^{t_1,t_2}_{123})^{-1}(\Phi^{t_1,t_4}_{145})^{-1}(\Phi^{t_2,t_4}_{246})^{-1}\Phi_{356}=\Phi_{356}(\Phi^{t_2,t_4}_{246})^{-1}(\Phi^{t_1,t_4}_{145})^{-1}(\Phi^{t_1,t_2}_{123})^{-1}.
\]
This could be simplified:
\bea
\Phi^{t_3}_{123}\Phi^{t_5}_{145}\Phi^{t_6}_{246}\Phi_{356}=\Phi_{356}\Phi^{t_6}_{246}\Phi^{t_5}_{145}
\Phi^{t_3}_{123}.\label{eqiv1}
\eea
Let us demonstrate  that \ref{eqiv1} is a consequence of the tetrahedron equation. To do it consider an equivalent form of the latter:
\bea
(\Phi_{356})^{-1}\Phi_{123}\Phi_{145}\Phi_{246}=\Phi_{246}\Phi_{145}\Phi_{123}(\Phi_{356})^{-1}\nn
\eea
and perform a partial inversion of both sides with respect to the components $3,5$ and $6.$ We start with the left hand side:
\bea
\left((\Phi_{356})^{-1}\Phi_{123}\Phi_{145}\Phi_{246}\right)^{t_3, t_5, t_6}=\left(\Phi_{123}\Phi_{145}\Phi_{246}\right)^{t_3, t_5, t_6}\Phi_{356}=\Phi^{t_3}_{123}\Phi^{t_5}_{145}\Phi^{t_6}_{246}\Phi_{356}.\nn
\eea
On the right hand side we obtain:
\bea
\left(\Phi_{246}\Phi_{145}\Phi_{123}(\Phi_{356})^{-1}\right)^{t_3,t_5,t_6}
=\Phi_{356}\left(\Phi_{246}\Phi_{145}\Phi_{123}\right)^{t_3,t_5,t_6}
=\Phi_{356}\Phi^{t_6}_{246}\Phi^{t_4}_{145}\Phi^{t_3}_{123}.\nn
\eea
The case, when the orientation of the face $135$ is inverted, can be treated in an analogous way: we start with the equivalent form of the TE 
\bea
\Phi_{145}\Phi_{123}\Phi^{-1}_{356}\Phi^{-1}_{246}=\Phi^{-1}_{246}\Phi^{-1}_{356}\Phi_{123}\Phi_{145}\nn
\eea
and perform the partial reversion on components $2,4$ and $6$
\bea
\Phi_{246}\Phi^{t_4}_{145}\Phi^{t_2}_{123}\Phi^{t_3,t_5}_{356}=
\Phi^{t_3,t_5}_{356}\Phi^{t_2}_{123}\Phi^{t_4}_{145}\Phi_{246}.\nn
\eea
Then by conjugating we obtain:
\bea
\Phi^{t_6}_{356}\Phi_{246}\Phi^{t_4}_{145}\Phi^{t_2}_{123}=
\Phi^{t_2}_{123}\Phi^{t_4}_{145}\Phi_{246}\Phi^{t_6}_{356}.\nn
\eea
This is exactly what we need to prove. The other cases are treated in the same way: just observe that the same techniques can be applied recurrently to arbitrary equalities of the sort
\[
A_{ijk}B_{lmn}\dots=\dots B_{lmn}A_{ijk},
\]
where $A,\,B,\dots$ are the maps $X^{\times 3}\to X^{\times 3}$, and $i,j,k,\dots$ are the triples of coordinates, to which they are applied. In particular, in our reasoning they can be different partly-transposed solutions. The important thing is that starting from a correct relation we obtain another correct relation.

Now we consider the mutation consisting in interchanging the neighboring faces. Let us consider the pair of faces: $145$ and $246.$ If we change their order this entails inversion of signs at vertices $123$ and $356$ 
as like as the direction of the edge $3.$ We also have to change the orders of incoming edges at vertices $123$ and $356.$ We illustrate this situation on the picture \ref{7refl}.
\begin{figure}[h!]
\centering
\includegraphics[width=15cm]{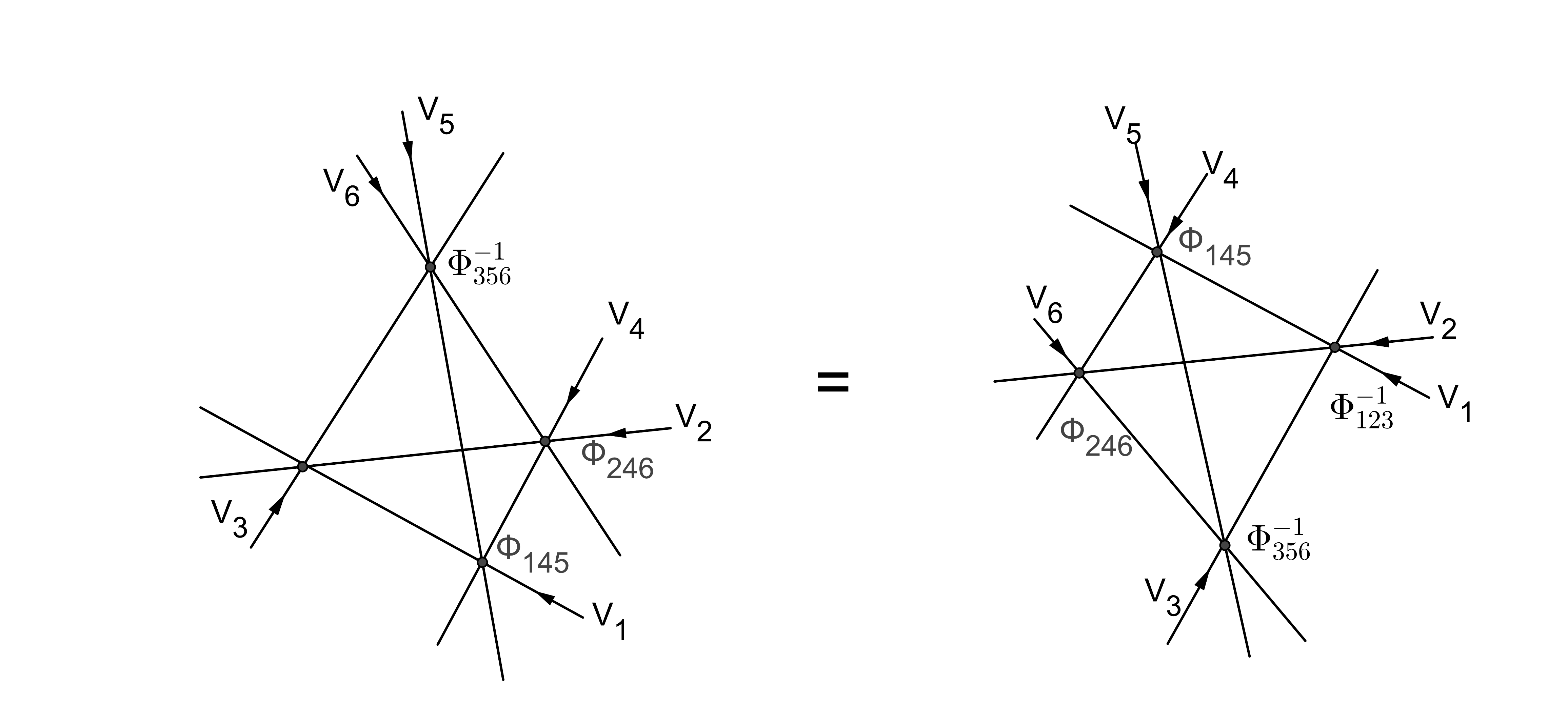}
\caption{7-th move, order change}
\label{7refl}
\end{figure}
Hence we need the following relation in this case:
\bea
\label{eq2}
\left(\Phi_{213}^{t_3}\right)^{-1}\Phi_{145}\Phi_{246}\left(\Phi_{365}^{t_3}\right)^{-1}
=\left(\Phi_{365}^{t_3}\right)^{-1}\Phi_{246}\Phi_{145}\left(\Phi_{213}^{t_3}\right)^{-1}.
\eea
To obtain this let us start with an equivalent form of the TE:
\bea
\Phi_{246}\Phi_{145}\Phi_{123}(\Phi_{356})^{-1}=(\Phi_{356})^{-1}\Phi_{123}\Phi_{145}\Phi_{246}\nn
\eea
then perfrorming the indices change $1\leftrightarrow 2,$ $5\leftrightarrow 6,$
\bea
\Phi_{145}\Phi_{246}\Phi_{213}(\Phi_{365})^{-1}=(\Phi_{365})^{-1}\Phi_{213}\Phi_{246}\Phi_{145}\nn
\eea
and ultimately the partial inversion with respect to the space $3$:
\bea
\Phi_{145}\Phi_{246}\Phi_{365}^{t_5 t_6}\Phi_{213}^{t_3}=\Phi_{213}^{t_3}\Phi_{365}^{t_5 t_6}\Phi_{246}\Phi_{145}.\nn
\eea
This differs from \ref{eq2} by multiplying by $\Phi_{213}^{t_1 t_2}$ on both sides. Other face choices are verified similarly. It is a bit tricky to prove in this way that similar equalities hold in the general case. However, one can pass to a slightly different point of view which will make the one can reason as follows: for every given order of 2-dimensional sheets on the left hand side of the 7-th move one can choose orientations of the sheets such that the corresponding orientations and order of the edges will be canonical. We recall, that the order of edges in a triple point is determined by the order of transversal 2-dimensional sheets, and thus the order of edges of tetrahedron is determined by this rule (up to the exchange of the 3-rd and the 4-th edges which can be ruled out by the condition that edges 1, 2 and 3 meet in one point). Now if the normals of faces 135 and 456 point outside and the rest two normals point inside the tetrahedron, then all the directions of the edges coincide with the standard ones. Thus, up to a permutation of vertices of tetrahedron we can obtain standard configuration of edges for arbitrary order of 2-branches (of course, this means that edges will change their numbers too).  Thus every permutation of sheets can be identified with the change of orientations of 2-dimensional branches in a standard configuration.

Let us proceed to the demonstration of the invariance of the partition function. To do this we need to calculate its part corresponding to the tetrahedral configuration of the whole graph.
 Let us recall that due to lemma \ref{lem_lin} 
\bea
A(e_i\otimes e_j \otimes e_k)=\phi(i,j,k) e_{i'}\otimes e_{j'} \otimes e_{k'}\nn
\eea
is a solution for the matrix TE.
Then the defining relation \ref{cocycle} for the cocycle $\phi$ can be interpreted as an equality of matrix elements for product of matrices $A$ which is equivalent to the matrix tetrahedron equation for the matrix $A.$
\begin{Lem}
\label{lem3}
If the partial transposed maps of $\Phi$ are defined than 
\bea
(A^{t_i})^{-1}=(A^{-1})^{t_i}.\nn
\eea
Here $A^{t_i}$ means the transposition with respect to the $i$-th space.
\end{Lem}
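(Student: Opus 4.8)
The plan is to exploit the \emph{monomial} (weighted-permutation) structure of $A$ provided by Lemma~\ref{lem_lin}. On basis vectors $A(e_x\otimes e_y\otimes e_z)=\phi(x,y,z)^s\,e_{x'}\otimes e_{y'}\otimes e_{z'}$ with $(x',y',z')=\Phi(x,y,z)$, so in matrix-element notation (upper indices label the output, lower ones the input) $A^{x'y'z'}_{xyz}=\phi(x,y,z)^s$ precisely when $\Phi(x,y,z)=(x',y',z')$ and $0$ otherwise; since $\Phi$ is a bijection and the multiplicative weights $\phi^s$ are nonzero, $A$ is invertible. I emphasize at the outset that the usual one-line argument for \emph{full} transposition — namely $M^tN^t=(NM)^t$, whence $(M^t)^{-1}=(M^{-1})^t$ — is \textbf{not} available here, because partial transposition does not reverse products: in general $(MN)^{t_i}\neq N^{t_i}M^{t_i}$. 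I would therefore compute the matrix elements of the two sides directly and check that they coincide both in their support and in their values.

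Both $(A^{t_i})^{-1}$ and $(A^{-1})^{t_i}$ are again monomial operators, so each is determined by its underlying bijection of basis vectors together with the nonzero scalar attached to each. It is exactly the hypothesis that the partial inversions of $\Phi$ be defined that guarantees $A^{t_i}$ is monomial and invertible (rather than a general linear map), so the assumption is used essentially. The underlying bijections agree on conceptual grounds already recorded before the statement: from $\Phi^{t_1t_2t_3}=\Phi^{-1}$ and $(\Phi^{t_i})^{-1}=\Phi^{t_jt_k}$ (with $\{j,k\}=\{1,2,3\}\setminus\{i\}$), together with $(\Phi^{t_i})^{t_i}=\Phi$ and the evident commutativity of partial inversions in distinct slots, one gets $(\Phi^{-1})^{t_i}=(\Phi^{t_1t_2t_3})^{t_i}=\Phi^{t_jt_k}=(\Phi^{t_i})^{-1}$ (this can also be checked directly from the definition of partial inversion).

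It remains to match the weights, which is the heart of the matter and the only place demanding care. Treating $i=1$ (the cases $i=2,3$ being identical after a permutation of indices), I first transpose: $(A^{t_1})^{abc}_{def}=A^{dbc}_{aef}=\phi(a,e,f)^s$ exactly when $\Phi(a,e,f)=(d,b,c)$, i.e. $A^{t_1}(e_d\otimes e_e\otimes e_f)=\phi(a,e,f)^s\,e_a\otimes e_b\otimes e_c$ with $(a,b,c)=\Phi^{t_1}(d,e,f)$. The crucial observation — and the main obstacle — is that the weight stays pinned to the argument $(a,e,f)$, which is the genuine $\Phi$-source of the underlying application $\Phi(a,e,f)=(d,b,c)$, rather than migrating to the transposed indices. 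Inverting this monomial operator gives $\bigl((A^{t_1})^{-1}\bigr)^{DEF}_{ABC}=\phi(A,E,F)^{-s}$, nonzero precisely when $\Phi(A,E,F)=(D,B,C)$. On the other side, $(A^{-1})^{pqr}_{uvw}=\phi(p,q,r)^{-s}$ when $\Phi(p,q,r)=(u,v,w)$, so transposing the first slot yields $\bigl((A^{-1})^{t_1}\bigr)^{DEF}_{ABC}=(A^{-1})^{AEF}_{DBC}=\phi(A,E,F)^{-s}$, again nonzero precisely when $\Phi(A,E,F)=(D,B,C)$. The two expressions agree term by term, in support and in value, proving $(A^{t_1})^{-1}=(A^{-1})^{t_1}$; the identical computation with indices permuted settles every $t_i$. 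Once the bookkeeping of the $\phi$-argument is done correctly, the equality is forced by the monomial structure, and this weight-tracking is the one genuinely delicate point of the argument.
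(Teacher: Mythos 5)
Your proof is correct and follows essentially the same route as the paper's: both arguments exhibit each side of the identity as the same explicit monomial (weighted-permutation) operator determined by $\Phi$ and the weight $\phi^{\pm s}$ pinned to the $\Phi$-source of the underlying basis map (the paper does this for $i=3$, you for $i=1$). Your version merely spells out the matrix-element bookkeeping that the paper leaves implicit.
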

\begin{proof}
Let us verify the statement for $i=3.$ In this case both sides are represented by a linear operator $B$ defined by 
\bea
B(e_{i'}\otimes e_{j'} \otimes e_{k'})=\phi^{-1}(i,j,k') e_i\otimes e_j \otimes e_k\nn
\eea
if $\Phi(i,j,k')=(i',j',k).$
\end{proof}
Hence the invariance of the partition function $Z(s)$ with respect to the standard $7$-th move is equivalent to the matrix tetrahedron equation on $A$ constructed by $\Phi$ and $\phi.$ Our goal is to tame all versions of the $7$-th move and we will proceed as in the case of colorings. First, consider the change of face orientation. In the $356$-face case we need the following:
\bea
(A_{123}^{t_1 t_2})^{-1}(A_{145}^{t_1 t_4})^{-1}(A_{246}^{t_2 t_4})^{-1}A_{356}=A_{356}(A_{246}^{t_2 t_4})^{-1}(A_{145}^{t_1 t_4})^{-1}(A_{123}^{t_1 t_2})^{-1}.\nn
\eea
This is equivalent to
\bea
A_{356}A_{123}^{t_1 t_2}A_{145}^{t_1 t_4}A_{246}^{t_2 t_4}=A_{246}^{t_2 t_4}A_{145}^{t_1 t_4}A_{123}^{t_1 t_2}A_{356}.\nn
\eea
Then applying transposition with respect to the spaces $1,2$ and $4$ we obtain the standard matrix TE: (we write only the left side)
\bea
A_{356}A_{123}^{t_1 t_2}A_{145}^{t_1 t_4}A_{246}^{t_2 t_4} \stackrel{t_4}{\rightarrow}
A_{356}A_{123}^{t_1 t_2}A_{246}^{t_2} A_{145}^{t_1}\stackrel{t_2}{\rightarrow}
A_{356}A_{246}A_{123}^{t_1} A_{145}^{t_1}\stackrel{t_1}{\rightarrow}
A_{356}A_{246}A_{145}A_{123}.\nn
\eea
Let us consider another nontrivial case: the orientation change for the $246$-face. We need:
\bea
(A_{123}^{t_1 t_3})^{-1}(A_{145}^{t_1 t_5})^{-1}A_{246}(A_{356}^{t_3 t_5})^{-1}=(A_{356}^{t_3 t_5})^{-1}A_{246}(A_{145}^{t_1 t_5})^{-1}(A_{123}^{t_1 t_3})^{-1}.\nn
\eea
By conjugating we obtain:
\bea
(A_{246})^{-1}A_{145}^{t_1 t_5} A_{123}^{t_1 t_3} (A_{356}^{t_3 t_5})^{-1}=(A_{356}^{t_3 t_5})^{-1} A_{123}^{t_1 t_3}A_{145}^{t_1 t_5}(A_{246})^{-1}.\nn
\eea
By lemma \ref{lem3} we change the order of operations:
\bea
(A_{246})^{-1}A_{145}^{t_1 t_5} A_{123}^{t_1 t_3} (A_{356}^{-1})^{t_3 t_5}=(A_{356}^{-1})^{t_3 t_5} A_{123}^{t_1 t_3}A_{145}^{t_1 t_5}(A_{246})^{-1}.\nn
\eea
and then perform the consequitive transposition of matrices with respect to the spaces $1,5$ and $3:$
\bea
(A_{246})^{-1}  (A_{356})^{-1}A_{123}A_{145}=A_{145}A_{123}(A_{356})^{-1}(A_{246})^{-1}.\nn
\eea
The last equation is equivalent to the TE.
 
The demonstration of the invariance for the configuration obtained by the face order change exactly reproduces the argument on the space of colorings.

\section{Conclusion}
This section contains few observations and remarks, concerning the possible generalizations and further developments of the theory, presented here. 
\subsection{Framed $2$-knots}
We begin with the following observation: the partition function we describe here is not \textit{a priori} an invariant of the knot itself, but only of the class of its diagrams, namely: it is preserved by all the Roseman moves, except the 2-nd and 4-th ones.

It is not difficult to understand the reason of this phenomenon. Namely as one readily sees, these two moves are principally different from the rest of the list at figure 6. Indeed, these moves do change the topology of the double points graph of the diagram in the following sense: at every triple point of the diagram one can say which pairs of double point edges are complementary (we called such pairs \textit{lines} above). Let us then ``resolve'' every triple point of the graph by saying that it is, in fact equal to three points, one at every line, meeting there. In this way we shall obtain a collection of 1-dimensional manifolds (circles and segments), uniquely defined by the diagram. We shall call this manifold a \textit{resolution of the singular point set}, and denote $S(D)$ (of course, $S(D)$ can be not connected). The following statement is obvious:
\begin{Prop}
The 3-rd, the 5-th, the 6-th and the 7-th Roseman moves do not change the topology of S(D). 
\end{Prop}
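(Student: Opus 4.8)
The claim is that the 3rd, 5th, 6th and 7th Roseman moves do not change the topology of the resolution $S(D)$ of the singular point set, where $S(D)$ is obtained by splitting every triple point into three points, one along each of the three lines passing through it. The plan is to verify this move-by-move, in each case tracking only the combinatorics of how lines enter and leave the region where the move is supported, since $S(D)$ is built entirely out of the lines (the complementary pairs of double-point edges) and is insensitive to the three-dimensional layering data. The key principle I would isolate first is this: after resolution, a triple point contributes \emph{nothing} beyond three transverse strands passing through a common neighborhood; hence the effect of any move on $S(D)$ is governed solely by how the underlying collection of line-strands is rerouted, not by how the sheets are stacked.

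First I would treat the 5th and 6th moves, which are the local ones. For the 6th move, the analysis already carried out earlier in the paper (see figures~\ref{6-move} and~\ref{6-move22}) shows that near the Whitney umbrella two of the three lines at the triple point are joined by a loop formed from parts of the \emph{same} sheet; resolving the triple point turns the configuration into three disjoint strands, and the move merely creates or cancels a triple point without merging or splitting any line. Thus $S(D)$ changes only by an isotopy of its strands, not by a change of topological type. The 5th move is of a similar purely local nature and changes only the number of connected components of $\Gamma(D)$ in a way that, after resolution into independent line-strands, is absorbed by isotopy; I would simply read this off the corresponding figure.

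Next I would handle the 3rd move. Here two triple points of opposite sign appear or disappear, but by the coloring analysis of Section~``The 3-rd move'' the three lines entering the deformed region are the \emph{same} three lines leaving it, reconnected in the unique way dictated by $\Phi$ followed by $\Phi^{-1}$ (or their partial transposes). After resolution, each line simply passes straight through, so $S(D)$ is changed only by an isotopy. The 7th move is the genuinely substantial case, and I expect it to be the main obstacle. Here four triple points are involved and the tetrahedral combinatorics of figure~\ref{7-th} must be used: I would argue that the six edges $123,145,246,356$ (labelled by the triples of transversal edges) pair into lines in a way that is preserved across the move, so that the resolved picture is, on both sides, a fixed collection of arcs realizing the boundary identifications of the tesseract. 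The hard part is checking that every variant of the 7th move—obtained by reorienting a face or by permuting neighboring sheets, exactly as enumerated in the 7th-move subsection—leaves the \emph{pairing into lines} intact, so that no two strands of $S(D)$ get merged or separated; I would reduce all variants to the standard configuration via the same permutation-of-vertices argument used there, and observe that reorientations and transpositions act on $S(D)$ only through relabeling and isotopy of its strands. Granting this, the resolution $S(D)$ is carried to a homeomorphic manifold by each of the four moves, which is precisely the assertion.
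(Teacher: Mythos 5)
The paper does not actually prove this Proposition: it is introduced with the words ``The following statement is obvious'' and no argument is supplied, so your move-by-move plan is already more than the source offers. Your organizing principle is the right one --- $S(D)$ remembers only the double-point curves as an abstract $1$-manifold, because each of the three lines through a triple point is the double curve of a fixed pair of sheets and the resolution is therefore canonical; hence a move preserves $S(D)$ exactly when it neither merges, splits, creates nor destroys double curves. This correctly disposes of the 3rd move (three double arcs in, the same three arcs out, rerouted through two cancelling triple points), the 6th move (the umbrella's double arc and the intersection of the transverse sheet with the umbrella sheet are each preserved as arcs, the latter merely acquiring a loop through the new triple point), and the 7th move (six double curves, one per pair of the four sheets, on either side of the tetrahedron picture). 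For the 7th move your case analysis over face orientations and sheet orderings is superfluous: those data enter the signs $\sigma(O)$, the edge orientations and the coloring rules, but $S(D)$ is an unoriented $1$-manifold and a single combinatorial count settles it.

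The genuine gap is the 5th move. Saying that it ``changes only the number of connected components of $\Gamma(D)$ in a way that \ldots is absorbed by isotopy'' is a contradiction in terms: an isotopy never changes the number of components. The paper's own justification of invariance of $\chi_\phi$ under the 1st and 5th moves is precisely that each of them creates (or deletes) a new, triple-point-free, freely colorable component of the double-point graph --- that is why the prefactor $h^{-d}$ cancels the extra free sum over $X$. Such a component (a trivial circle of double points, or a short double arc bounded by two branch points) is a new connected component of $S(D)$, so on that reading the 5th move does change the homeomorphism type of $S(D)$ and the Proposition fails for it as literally stated, holding only modulo the addition or deletion of such trivial components. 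To close this case you must either identify the 5th move precisely and verify that it creates or destroys no double-curve components, or state explicitly that the claim is to be read up to trivial components; as written, your argument for the 5th move does not go through.
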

Unlike this, the 2-nd movement always changes the number of components of the boundary of $S(D)$, and as for the 4-th movement, it can both change the topology and preserve it. Thus we can hope, that \textit{the partition function, discussed in this paper is invariant of knots diagrams whose $S(D)$ is fixed}.

This situation is quite similar to the theory of framed knot invariants: recall, that the framed knot invariant is an invariant of usual (i.e. 1-dimensional) knots diagrams which is preserved by the 2-nd and the 3-rd Reidemeister moves and some modification of the 1-st move. It can be interpreted as an invariant of the knots, equipped with a normal vector field in $\mathbb R^3$. It would be intriguing to find similar interpretation of the partial invariant we discuss here.

\subsection{Quantum topological field theory in $d=4$}
It is well known \cite{Witten} that there is a close relation between the theory of Jones-Witten invariants  and Chern-Simons quantum topological field theory, namely it turns out that the Jones polynomial coincides with the mean value of the Wilson loop in this theory with the gauge group $SU(2).$ There are many generalizations of this theory with wide spectrum of applications in modern physics. On of them is the BF-theory \cite{BF} whose classical version is given by a Lagrangian:
\bea
S(\omega,B)=tr\int_M B\wedge F \nn
\eea
where $F=d\omega+\omega\wedge \omega$ is the connection curvature, and $B$ is a 2-form on a 4-manifold $M$. 
An imminent problem here is a quest for an analog of the Witten result, i.e. an interpretation of the mean value of the monodromy of the 2-connection $B$ aver a 2-surface in $M$ as an invariant. Up to our knowledge  the mathematical description of such a theory could be considered more or less complete only in the abelian case, when the role of a bundle is played by gerbes, for which there is an adequate algebraic theory. However there are some approaches to nonabelian gerbes \cite{ng}, and some differential geometric description for 2-connections \cite{BS}.

Moreover we suppose that there is a deep relation of this problem with the combinatorial approach to the topological field theories \cite{comb}. In this domain one constructs combinatorial invariants of manifolds, that is some quantities constructed as invariants of simplicial complexes with respect to the Pachner moves. We hope that in analogy with the constructions of the invariants of instrumented 3-manifolds, i.e. manifolds with a tangle, one would be able to construct invariants of instrumented 4-manifolds with embedded 2-surfaces.

\subsection{Regular $3$-d lattices and statistical models}
Consider a periodic three-dimensional lattice of size $K\times L\times M,\ K,L,M\in\mathbb N$; we mark the edges incoming to the node $(i,j,k)$ as $x_{i,j,k},y_{i,j,k},z_{i,j,k}$. The periodicity conditions imply $*_{N+1,j,k}=*_{1,j,k},$ and similar identities for other indexes. Consider a statistical model with the Boltzmann weights in the nodes of the lattice sites determined by the value of the 3-cocycle $\phi$ of the tetrahedral complex. The states are defined as admissible coloring of the edges, i.e. such that in each node the condition fulfills:
\bea
\Phi(x_{i,j,k},y_{i,j,k},z_{i,j,k})=(x_{i+1,j,k},y_{i,j+1,k},z_{i,j,k+1}).
\eea
A partition function is defined as follows:
\bea
Z(s)=\sum_{Col}\prod_{i,j,k}\phi(x_{i,j,k},y_{i,j,k},z_{i,j,k})^s.
\eea
\\
To explore the ''integrability" of the subsidiary quantum problem one needs the layer-to-layer transfer-matrix. In order to determine what it is, we need another interpretation of the partition function.
\\
We associate a copy of the space $V$ to each line of the lattice. For convenience, we denote the vertical spaces by characters $V_{ik}$ and the horizontal ones - by $E_i$ and $N_k.$
\\
We construct an operator $A_{ik}(s)$ with the chosen $3$-cocycle according to lemma \ref{lem_lin} statement.
\begin{figure}[h]
\centering
\includegraphics[width=120mm]{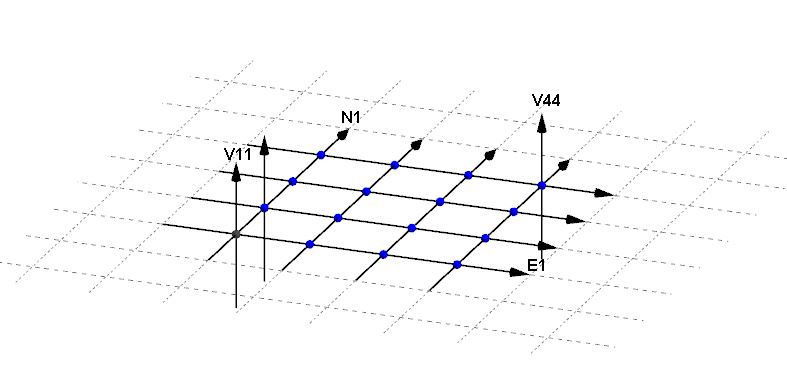} 
\caption{1-layer configuration}
\end{figure}
\\
Let us define the transfer-matrix by a $1$-layer product:
\bea
T(s)=Tr\prod_i\prod_k A_{ik}(s).\nn
\eea
The product and trace of matrices in the formula are taken with respect to horizontal spaces. This operator acts on the tensor product of vertical spaces. Here $A_{ik}(s)$ is an operator on the space $E_i\otimes V_{ik}\otimes N_k$. It turns out that the partition function takes the form
\bea
Z(s)=Tr_{V_{jk}} T(s)^L.\nn
\eea

Such issues as the asymptotic behavior of partition functions when the size of the lattice grows may be solved by the study of the spectrum of the transfer-matrix. The integrability condition, i.e. the possibility of including the transfer-matrix in a large commutative family, simplifies the problem of finding the spectrum.


\begin{thebibliography}{50}
\bibitem{T}
D.V. Talalaev, {\em Zamolodchikov tetrahedral equation and higher Hamiltonians of 2d quantum integrable systems.}  arXiv:1505.06579

\bibitem{CES} S. Carter, M. Elhamdadi, M. Saito {\em Homology Theory for the Set-Theoretic Yang-Baxter Equation and Knot Invariants from Generalizations of Quandles}, arXiv:0206255

\bibitem{Zam} 
A. B. Zamolodchikov, {\em Tetrahedra equations and integrable systems in three-dimensional
space}, Zh. Eksp. Teor. Fiz. {\bf 79} (1980) 641Â664. [English translation: Soviet Phys. JETP {\bf 52} (1980) 325-326].

\bibitem{KST}
I. Korepanov, G. Sharygin, D. Talalaev {\em Cohomologies of $n$-simplex relations.}  arXiv:1409.3127 

\bibitem{W}
E. Witten, {\em Quantum Field Theory and the Jones Polynomial.} Commun. Math. Phys. {\bf 121},351-399 (1989)

\bibitem{CSJ}
Carter, J.S., Jelsovsky, D., Langford, L., Kamada, S., Saito, M., {\em Quandle cohomology and
state-sum invariants of knotted curves and surfaces,} Trans. Amer. Math. Soc. {\bf 355} (2003), no.
10, 3947-3989.

\bibitem{Zeeman}
E. C. Zeeman, {\em Twisting spun knots.} Trans. Amer. Math. Soc. {\bf 115} (1965), 471{495. MR
33:3290}

\bibitem{Roseman98}
D. Roseman, {\em Reidemeister-type moves for surfaces in four-dimensional space.} Knot theory, Banach center publications, Vol {\bf 42}, Institute of mathematics, Polish academy of sciences. Warszawa 1998.

\bibitem{Mat82}
S. V. Matveev, {\em Distributive groupoids in knot theory}, Mathematics of the USSR-Sbornik(1984),{\bf 47}(1):73 

\bibitem{Witten}
E. Witten, {\em Quantum Field Theory and the Jones Polynomial, } Commun. Math. Phys. {\bf 121}, 351-399 (1989)

\bibitem{BF}
A. Cattaneo, P. Cotta-Ramusino, J. Frohlich, M. Martellini, {\em Topological BF theories in 3 and 4
dimensions,} hep-th/9505027

P. Mnev	{\em Notes on simplicial BF theory,} Mosc. Math. J., 2009, {\bf 9},	N2, pp 371-410

\bibitem{ng}
P. Aschieri, L. Cantini, and B. Jurco, {\em Nonabelian bundle gerbes, their differential geometry and
gauge theory} . Commun. Math. Phys., {\bf254}: 367-400, 2005. [arxiv:hep-th/0312154]

\bibitem{BS}
J. Baez, U. Schreiber, {\em Higher Gauge Theory: 2-Connections on 2-Bundles,} arXiv:hep-th/0412325

\bibitem{comb}
I. Korepanov, {\em Geometric torsion and topological field theories a la Atyah,} Theoretical and mathematical physics, 2009, {\bf 158}, 3, 405-418


\end{thebibliography}
\end{document}